\newtheorem{theorem}{Theorem}[section]
\newtheorem{lemma}[theorem]{Lemma}
\theoremstyle{definition}
\newtheorem{definition}[theorem]{Definition}
\newtheorem{invariant}[theorem]{Invariant}
\newcommand{\whp}{\emph{whp}\xspace}
\newcommand{\eps}{\varepsilon}
\newcommand{\defn}[1]{\textbf{\emph{#1}}}
\newcommand{\etal}{et al.\xspace{}}
\newcommand{\core}{k}
\newcommand{\kest}{\hat{\core}}
\newcommand{\lf}{\psi}
\newcommand{\upexp}{(1+\lf)}
\newcommand{\upexpold}{(2+\lambda)(1+\lf)}
\newcommand{\hnup}{\widehat{\nup}}
\newcommand{\lcur}{r}
\newcommand{\degen}{d}
\newcommand{\release}{\textbf{releases}\xspace}
\newcommand{\multfactor}{\phi}
\newcommand{\addfactor}{\zeta}
\newcommand{\order}{J}
\newcommand{\rangeout}{\mathcal{Y}}
\newcommand{\adj}{\mathbf{a}}
\newcommand{\coren}{\eta}
\newcommand{\numlevels}{4\log^2 n}
\newcommand{\numgrouplevels}{2\log n}
\newcommand{\nup}{\mathcal{U}}
\newcommand{\alg}{\mathcal{A}}
\newcommand{\poly}{\text{poly}\xspace}
\newcommand{\kc}{$k$-core\xspace}
\crefname{theorem}{Theorem}{Theorems}
\crefname{lemma}{Lemma}{lemmas}
\Crefname{lemma}{Lemma}{Lemmas}
\Crefname{claim}{Claim}{Claims}
\crefname{invariant}{Invariant}{Invariants}
\Crefname{observation}{Observation}{Observations}
\Crefname{algorithm}{Algorithm}{Algorithms}
\Crefname{myalgctr}{Algorithm}{Algorithms}
\Crefname{challenge}{Challenge}{Challenges}
\crefname{algocf}{alg.}{algs.}
\Crefname{algocf}{Algorithm}{Algorithms}
\title{Near-Optimal Differentially Private $k$-Core Decomposition}
\author{Laxman Dhulipala$^1$, George Z. Li$^1$, and Quanquan C. Liu$^2$}
\date{$^1$University of Maryland and $^2$Simons Institute at UC Berkeley}
\begin{document}

\maketitle

\begin{abstract}
    Recent work by Dhulipala et al.~\cite{DLRSSY22} initiated the study of the $k$-core decomposition problem under differential privacy via a connection between low round/depth distributed/parallel graph algorithms and private algorithms with small error bounds. They showed that one can output differentially private approximate $k$-core numbers, while only incurring a multiplicative error of $(2 +\eta)$  (for any constant $\eta >0$) and additive error of $\poly(\log(n))/\eps$. In this paper, we revisit this problem. Our main result is an $\eps$-edge differentially private algorithm for $k$-core decomposition which outputs the core numbers with no multiplicative error and $O(\text{log}(n)/\eps)$ additive error. This improves upon previous work by a factor of 2 in the multiplicative error, while giving near-optimal additive error. Our result relies on a novel generalized form of the sparse vector technique, which is especially well-suited for threshold-based graph algorithms; thus, we further strengthen the connection between distributed/parallel graph algorithms and differentially private algorithms.
    
    With a little additional work, we improve the additive error of the $\eps$-local edge differentially private $O(\log^2 n)$ round $(2+\eps)$-approximate$k$-core decomposition algorithm of~\cite{DLRSSY22} from $O\left(\frac{\log^3 n}{\eps}\right)$ to $O\left(\frac{\log n}{\eps}\right)$. We also present improved algorithms for densest subgraph and low out-degree ordering under differential privacy. For low out-degree ordering, we give an $\eps$-edge differentially private algorithm which outputs an implicit orientation such that the out-degree of each vertex is at most $d+O(\log{n}/{\eps})$, where $d$ is the degeneracy of the graph. This improves upon the best known guarantees for the problem by a factor of $4$ and gives near-optimal additive error. For densest subgraph, we give an $\eps$-edge differentially private algorithm outputting a subset of nodes that induces a subgraph of density at least ${D^*}/{2}-O(\text{log}(n)/\eps)$, where $D^*$ is the density for the optimal subgraph, improving over the additive error of all previous work.
\end{abstract}
\thispagestyle{empty}
\clearpage

\pagenumbering{arabic}
\section{Introduction}

The $k$-core decomposition is a fundamental graph characteristic that offers
deep insights into the structure of any graph. It is also an unique problem in 
the sense that it has managed to receive much more attention in recent years 
from the database and systems communities (for a comprehensive survey, see~\cite{malliaros2020core}) compared with the algorithms 
community (see \cite{DLRSSY22} and references therein). Such interests within the database and systems communities
largely stem from the many applications of solutions to the problem to various
practical tasks including community detection~\cite{Alvarez2005,Esfandiari2018,Ghaffari2019}, clustering and data mining~\cite{costa2011analyzing, shin2016corescope},
and various machine learning and graph analytic applications~\cite{Kabir2017, dhulipala2017julienne, dhulipala2018theoretically}. Due to the pace
by which practical implementations of $k$-core decomposition is currently being developed in many applications involving sensitive user data,
it is becoming ever more important to develop private algorithms for the problem.

The purpose of this paper is two-fold. First, we seek to fill in the gap for 
accurate, private and efficient algorithms for $k$-core decomposition.
Second, we seek to contribute to the growing but nascent body of theoretical 
work on formalizing the connection between low complexity parallel/distributed graph algorithms
and differentially private graph algorithms. The problem we study in this paper
have such widespread applications because the 
$k$-core decomposition is a very \emph{efficient} and \emph{accurate} measure 
of the set of important, well-connected vertices in a graph. Such a 
set of vertices is colloquially known as the ``core'' of the graph.
Formally, the $k$-core of a graph is the maximal induced
subgraph of $G$ where each vertex in the induced subgraph has degree 
at least $k$. Since this definition captures a single core, the $k$-core decomposition problem (sometimes also called the {\em coreness} problem) is to identify the {\em core number} of every vertex $v$, which is defined to be the largest $k$ for which $v$ is in the $k$-core, but not in the $(k+1)$-core.\footnote{We note that the $k$-core decomposition problem in the literature is sometimes used to refer to the problem of computing a hierarchy of $k$-cores based on graph connectivity~\cite{sariyuce2016fast}; note that it is not possible to emit such a hierarchy privately, so we (and prior work) focus on privately emitting core numbers for every vertex.} Identifying the coreness values of every vertex is an important way of classifying the structural importance of each node in the graph.

Private $k$-core decomposition was introduced in the recent work of 
Dhulipala \etal~\cite{DLRSSY22} with differential privacy guarantees;
they give an algorithm that
produces approximate core numbers for every
vertex with a multiplicative approximation
factor of $2+\eta$ and $O(\text{poly}(\log n)/\eps)$ additive
error, for any constant $\eta > 0$, in $O(\log n)$ rounds.
Differential privacy~\cite{DMNS06} is the gold standard 
for privacy in modern data analytics. 
They study the problem in the very strong 
\emph{local} edge model of differential privacy (LEDP). In particular, 
the local model guarantees the privacy of each individual adjacency list
such that no edge is known by any additional party beyond
the two vertices that are the endpoints of the edge. 
In other words, there is \emph{no notion of a
trusted third-party} in the local model of differential privacy. 
Since this model is inherently a distributed model, it is particularly important that the 
number of rounds of communication is small (e.g.\ $\poly(\log n)$). Another 
subtlety of the local model is that often the number of queries to a particular vertex
needs to be limited in order to achieve good approximation guarantees; this limits the set of 
algorithms we may consider for the problem.
By ensuring a small number of distributed rounds (and number of queries to any particular vertex), when
translating such algorithms to the central model (run on a 
centralized server on one machine), the running time
often becomes superlinear. In particular, the $k$-core decomposition
algorithm of~\cite{DLRSSY22}, when
directly translated to the central model, requires $\Omega(n^2 \log n)$
time. However, as there is no lower bound showing that a $(2+\eta)$-multiplicative approximation is necessary for privacy, it
was left open whether one can improve on this multiplicative
approximation factor. Additionally, their additive approximation
requires a fairly large poly-logarithmic factor and it was also left open whether this result
could be improved.

Although it is often the case that algorithms in the local model often have worse approximation
guarantees than algorithms in the central model for the same problem, it is not the case in this paper.
In this paper, we revisit the differentially private $k$-core decomposition problem 
and provide improved utility guarantees in both the \emph{central and local edge}
models of differential privacy. The central model assumes a trusted curator which has access to the private data and carefully processes it before releasing the privatized information. We work in the differential privacy models under edge-differential privacy. Our $k$-core decomposition algorithm in particular provides better utility guarantees than the recent work of~\cite{DLRSSY22}, providing an approximation with 
no multiplicative factor of the coreness values of every vertex with only an 
$O(\log n / \eps)$ additive error with high probability. Our algorithm runs in linear $O(n + m)$ time
but may use $\Omega(n)$ rounds in the distributed setting. However, we provide another algorithm in the distributed
setting that further improves the additive error of the 
$O(\log^2 n)$ round, $(2+\eta)$-approximation algorithm of~\cite{DLRSSY22} from $O\left(\frac{\log^3 n}{\eps}\right)$
to $O\left(\frac{\log n}{\eps}\right)$.
All algorithms presented in this paper are based on a novel, simple
generalized formulation of the AboveThreshold procedure 
from the sparse vector technique~\cite{dwork2009complexity} which we call the \emph{multidimensional
AboveThreshold technique (MAT).}
Like most existing algorithms for $k$-core, our algorithm is based on iterative peeling, i.e.\ iteratively removing the lowest degree vertices from the graph until all remaining vertices have large enough degree.
We show that our multidimensional version of the sparse vector technique can be easily adapted to such
threshold-based algorithms. Since these threshold-based algorithms are used frequently in distributed/parallel 
algorithms, our paper further strengthens the link between parallel/distributed graph algorithms and differentially
private algorithms via this general technique.

\subsection{Our Contributions}\label{sec:contributions}

Given an undirected graph $G=(V,E)$, let $n=|V|$ be the number of vertices and $m =|E|$ be the number of edges in the graph. We can define the $k$-core of a graph and core numbers of the vertices as follows:

\begin{definition}
A $k$-core of the graph is a maximal induced subgraph $H$ of $G$ where each vertex in $H$ has induced degree at least $k$. A vertex $v$ has {\em core number} (or {\em coreness}) $k$ if $v$ is in the $k$-core but not the $(k+1)$-core of $G$.
\end{definition}

When outputting approximate core numbers of a graph, one usually only requires a multiplicative approximation. In our setting, the algorithms need to additionally incur additive error in order to preserve differential privacy. Our notion of approximate $k$-core decompositions is defined as follows:

\begin{definition}
Given a vertex $v$, let $k(v)$ be the core number of $v$ and suppose the algorithm outputs $\hat{k}(v)$ as the approximate core number of $v$. Then we say that the algorithm gives a {\em $(\phi, \zeta)$-approximation} to the core numbers if we have $k(v) - \zeta \leq \hat{k}(v) \leq \phi\cdot k(v) + \zeta$ for all $v\in V$.
\end{definition}

Our main result is a private algorithm for $k$-core decomposition which improves both the multiplicative and additive approximation guarantees of the algorithm in \cite{DLRSSY22}. We adapt a standard peeling algorithm for the $k$-core decomposition using a clever generalization of the sparse vector technique to satisfy $\eps$-edge differential privacy and shows that the resulting algorithm achieves a $(1,O(\log{n} / \eps))$-approximation to the core numbers. Since the maximum $k$-core number is within a factor of $2$ of the densest subgraph in any graph by the Nash-Williams theorem, the additive lower bound given by~\cite{NV21, AHS21} also translates to a $\Omega(\sqrt{\log(n)/\eps})$ additive error lower bound for the $k$-core decomposition problem, so our result is nearly tight.

\begin{theorem}
There is an algorithm which gives $(1, O(\frac{\log(n)}{\eps}))$-approximate core numbers that is $\eps$-(local) edge differentially private and guarantees the approximation with high probability.
\end{theorem}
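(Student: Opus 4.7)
The plan is to start from the standard (non-private) peeling algorithm for $k$-core decomposition, which maintains a threshold $k$ (initially $0$) and repeatedly removes every vertex whose residual degree in the surviving subgraph is at most $k$; once no such vertex remains, $k$ is incremented, and the core number of each vertex equals the value of $k$ at the moment it is peeled. I would first recast this algorithm as an adaptive stream of threshold comparisons: at each step, and for each surviving vertex $v$, one asks whether the residual degree of $v$ in the remaining graph is at most the current threshold $k_t$, and $v$ is peeled exactly when this query first returns ``yes''.

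The privatization uses the multidimensional AboveThreshold (MAT) procedure advertised in the introduction. Instead of a single one-dimensional noisy comparator, MAT maintains a noisy threshold together with one noisy residual-degree value per surviving vertex, and it ``fires'' on vertex $v$ precisely when the noisy residual degree of $v$ falls at or below the noisy threshold plus a calibrated slack of $O(\log n /\eps)$. Since each vertex is peeled at most once, the total number of ``fires'' across the full execution is at most $n$. Toggling a single edge $(u,v)$ changes only the residual-degree queries of $u$ and $v$, and only by at most one, so the multidimensional generalization of the standard sparse vector analysis yields total $\eps$-edge privacy \emph{independent of the overall number of (below-threshold) queries}. Because the peeling decision for $v$ depends only on $v$'s noisy value and the shared noisy threshold, the same argument carries over to the local edge model, giving $\eps$-LEDP.

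For utility, I would apply a standard union bound over all $O(n^2)$ noise draws (at most $n$ thresholds times $n$ vertices) together with tail inequalities for Laplace or geometric noise at scale $\Theta(1/\eps)$, concluding that every noisy perturbation lies within $O(\log n /\eps)$ with probability $1 - 1/\poly(n)$. Combined with the MAT slack, this guarantees that a vertex fired by the private algorithm at noisy threshold $\hat{k}(v)$ would have been peeled in the exact algorithm at some level within $O(\log n /\eps)$ of $\hat{k}(v)$ in both directions, yielding the claimed $(1, O(\log n /\eps))$-approximation with high probability. Linear $O(n+m)$ running time follows because we can amortize the degree updates and noise recomputations across the standard bucket-based implementation of peeling.

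The main obstacle is the privacy analysis under the heavy adaptivity of peeling: the degree queries posed at later rounds depend on which vertices were peeled in earlier rounds, which in turn depend on the noise realizations. A naive application of the classical sparse vector technique would either charge privacy per vertex (inflating to $\Theta(n\eps)$ total) or force periodic restarts that blow up the additive error to $\poly(\log n)/\eps$ as in \cite{DLRSSY22}. Showing that MAT circumvents both issues --- extracting $\eps$-DP over the full $n$-vertex, many-round peeling with only $O(\log n /\eps)$ additive slack --- is the crux of the argument, and will hinge on carefully coupling the noise distributions so that toggling a single edge changes the joint distribution of all firing events by only a multiplicative $e^{\eps}$ factor.
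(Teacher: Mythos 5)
Your proposal matches the paper's proof essentially step for step: recast the classical peeling algorithm as an adaptive sequence of threshold queries of sensitivity $\Delta=2$, invoke the multidimensional AboveThreshold mechanism (whose privacy is proved exactly by the noise-coupling/change-of-variables argument you identify as the crux) so that each vertex pays for privacy only at the single round where it is peeled, and then union-bound the Laplace tails to get the $O(\log n/\eps)$ additive error. The only loose end is your final running-time remark (the paper needs a geometric-resampling trick and a $(1+\eta)$ relaxation to get near-linear time, since fresh per-round noise defeats the bucket-based implementation), but no running-time bound is claimed in the theorem, so this does not affect correctness.
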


We also give another $\eps$-local edge differentially private algorithm that obtains the desired $\poly(\log n)$ round complexity.

\begin{theorem}
    There is an algorithm which gives $(2+\eta, O(\frac{\log n}{\eps}))$-approximate core numbers (for any constant 
    $\eta > 0$) that is $\eps$-local
    edge differentially private, runs in $O(\log^2 n)$ rounds, and guarantees the approximation with high probability.
\end{theorem}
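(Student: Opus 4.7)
The plan is to take the $O(\log^2 n)$-round, $(2+\eta)$-approximate locally-private $k$-core decomposition algorithm of Dhulipala et al.~\cite{DLRSSY22} and replace its Laplace/geometric-noise-based threshold checks with the multidimensional AboveThreshold (MAT) mechanism developed earlier in this paper. Recall that their algorithm assigns each vertex a level drawn from $O(\log^2 n)$ levels (arranged into $O(\log n)$ groups of $O(\log n)$ levels each), and in every one of the $O(\log^2 n)$ rounds each vertex locally decides whether to move up by comparing its up-degree against a level-dependent threshold. In~\cite{DLRSSY22}, the per-query additive error is $O(\log^2 n/\eps)$ because naive composition across the rounds forces each query's noise scale to be $O(\log^2 n/\eps)$, and the approximation argument loses a further $\log n$ factor through the level hierarchy, producing their $O(\log^3 n/\eps)$ bound.

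The key observation is that each round's decision is a pure threshold comparison, which is exactly the setting MAT targets: along a single stream of threshold queries, MAT charges privacy only for queries that cross the threshold, not for the stream's total length. Since the leveling algorithm guarantees that any vertex moves up at most $O(\log n)$ times within any single group (and $O(\log^2 n)$ times overall), we can configure MAT with a positive-answer cap of $O(\log n)$, yielding noise of magnitude $O(\log n/\eps)$ per threshold test rather than $O(\log^2 n/\eps)$. I would instantiate one MAT instance per vertex over the stream of its per-round up-degree queries; each vertex uses only its local adjacency list together with $1$-bit messages from neighbors indicating whether each neighbor currently sits above the relevant level threshold, so the local edge-DP property is preserved and the $O(\log^2 n)$ round complexity is unaffected.

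The main obstacle is to redo the approximation analysis of~\cite{DLRSSY22} under the MAT noise model. Their argument uses standard concentration plus a union bound over all noisy degree queries to conclude that every estimate is within $O(\log^2 n/\eps)$ of truth, then propagates these deviations through the level invariants that certify the $(2+\eta)$-approximation. I would replace this concentration step with the MAT utility guarantee: conditioned on the positive-answer cap never being exceeded (which must be verified from the structural properties of the leveling, namely that a vertex's level is monotone and bounded by $O(\log^2 n)$), every threshold test is correct up to $O(\log n/\eps)$ slack with high probability, by a union bound over the $O(n \log^2 n)$ total queries. Once this sharpened per-query bound is in hand, the remainder of the analysis of~\cite{DLRSSY22} goes through essentially unchanged, so the level-based $(2+\eta)$ multiplicative factor is inherited while the additive error drops from $O(\log^3 n/\eps)$ to $O(\log n/\eps)$.
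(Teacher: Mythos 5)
Your high-level plan --- rerun the level-based algorithm of \cite{DLRSSY22} but account for its per-round threshold tests via the sparse-vector/MAT mechanism instead of naive composition over $O(\log^2 n)$ rounds --- is the route the paper takes. However, the way you configure MAT contains a genuine gap. You propose a ``positive-answer cap of $O(\log n)$,'' reasoning that a vertex may move up $O(\log n)$ times within a group. This misidentifies which event the sparse vector technique charges for. In the level data structure, a vertex at the frontier keeps ascending as long as its up-degree \emph{exceeds} the level threshold, and the first time the test fails it halts permanently and is never queried again. Framing the $r$-th query as (threshold minus up-degree), the ``above threshold'' event is the single round at which the vertex stops; the many rounds in which it moves up are all $\bot$ answers and cost nothing. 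Hence each coordinate has exactly \emph{one} threshold crossing, the mechanism of \Cref{alg:local multidimensional Above Threshold} applies directly with sensitivity $\Delta=2$ and noise scale $O(1/\eps)$ per query, and a union bound over all $O(n\log^2 n)$ queries gives $O(\log n/\eps)$ slack per test with high probability.

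With your configuration the arithmetic does not close. A pure-$\eps$-DP sparse vector that tolerates $c=O(\log n)$ above-threshold answers per coordinate requires noise scale $O(c\Delta/\eps)=O(\log n/\eps)$, and the whp magnitude of such noise after the union bound is $O(\log^2 n/\eps)$, not the $O(\log n/\eps)$ you assert; the additive error would therefore stay at $O(\log^2 n/\eps)$ and the theorem would not be recovered. Moreover, \Cref{thm:privacy AboveThreshold} only covers the one-crossing-per-coordinate case, so the multi-crossing variant you invoke would need its own privacy proof. The fix is to exploit the monotone stop structure described above; once that is in place, your remaining plan of re-deriving the degree invariants of \cite{DLRSSY22} with the sharpened $O(\log n/\eps)$ slack is exactly what the paper does in \Cref{app:distributed-approx-proof}.
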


In addition to $k$-core decomposition, we study the closely related densest subgraph problem, which we define and state our results for next. The density of a subgraph, and the densest subgraph problem are defined as follows:

\begin{definition}
Given an undirected graph $G(V,E)$, and a subgraph $S = (V_S, E_S)$ of $G$ where $E_S$ is the induced 
subgraph of $G$ on $V_S$, the {\em density} of $S$ is defined to be $d(S) = |E_S|/|V_S|$.
The {\em densest subgraph problem} is to find a subgraph $H_{\max}$ of $G$ which maximizes the density.
\end{definition}

As stated above, we cannot output the exact densest subgraph without violating privacy. Thus, we define an approximate densest subgraph, where $d(H)$ denotes the density of $H$.

\begin{definition}
Let the optimal density of a densest subgraph in $G$ be $D^{*}$. A subgraph $H$ is a $(\phi, \zeta)$-approximation to the densest subgraph if $d(H) \geq \frac{D^{*}}{\phi} - \zeta$ for $\phi$, $\zeta \geq 1$.
\end{definition}

Our main result for densest subgraph is given next, which follows from our $k$-core decomposition result.

\begin{theorem}
There is an algorithm which gives a $(2, O(\log n / \eps))$-approximation for the densest subgraph problem that is $\eps$-(local) edge differentially private and achieves the approximation guarantee with high probability.
\end{theorem}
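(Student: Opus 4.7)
The plan is to reduce to the $(1, O(\log n/\eps))$-approximate $k$-core decomposition given by the first theorem above, using the classical fact that the maximum $k$-core is already a $2$-approximate densest subgraph. Let $k_{\max}$ denote the maximum core number of $G$ and $C_k$ the true $k$-core. Since every vertex of the densest subgraph has minimum induced degree $\geq D^*$, and so true core number $\geq \lfloor D^*\rfloor$, we have $k_{\max}\geq \lfloor D^*\rfloor$; moreover $C_{k_{\max}}$ has minimum internal degree $k_{\max}$ and hence density $\geq k_{\max}/2 \geq D^*/2-O(1)$. So it suffices to privately produce a good proxy for $C_{k_{\max}}$ whose density is close to $D^*/2$.

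The algorithm I propose is: (a) run the core-number algorithm with privacy budget $\eps$ to obtain $(1,\zeta)$-approximate core numbers $\hat k(v)$, where $\zeta = O(\log n/\eps)$; (b) compute $\hat K := \max_v \hat k(v)$, which satisfies $|\hat K - k_{\max}| \leq \zeta$; and (c) output the induced subgraph on the level set $S := \{v : \hat k(v) \geq \hat K - 2\zeta\}$. Privacy is immediate by post-processing, since $S$ is a deterministic function of the differentially-private output $\hat k$. For utility, combining the $(1,\zeta)$-approximation with $|\hat K - k_{\max}| \leq \zeta$ yields the nested inclusions $C_{k_{\max}} \subseteq S \subseteq C_{k_{\max} - 4\zeta}$ (vertices in $C_{k_{\max}}$ have $\hat k(v)\geq k_{\max}-\zeta\geq \hat K-2\zeta$, while vertices in $S$ have true core number $\geq \hat K-3\zeta\geq k_{\max}-4\zeta$). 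The target is then to show $d(G[S]) \geq D^*/2 - O(\log n/\eps)$ with high probability.

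The main obstacle is precisely this final density bound. From the sandwich we know $|E(G[S])| \geq |E(C_{k_{\max}})| \geq k_{\max}|C_{k_{\max}}|/2$ and that every $v \in S$ has at least $k_{\max} - 4\zeta$ neighbors inside $C_{k_{\max} - 4\zeta}$; however, $|S|$ can strictly exceed $|C_{k_{\max}}|$, so naively dividing the edge count by $|S|$ may drop the density, and the extra vertices $S \setminus C_{k_{\max}}$ might have many of their core-neighbors lying in $C_{k_{\max} - 4\zeta} \setminus S$ and therefore contribute few edges to $G[S]$. I would attempt to close this gap with a degree-loss argument that bounds the total number of edges destroyed at the boundary $C_{k_{\max} - 4\zeta} \setminus S$ by $O(\zeta)\cdot|S|$, which would yield $d(G[S]) \geq (k_{\max} - O(\zeta))/2 \geq D^*/2 - O(\log n/\eps)$. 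As a robust fallback, one can split the privacy budget and add a private selection step: run the exponential mechanism over the $n$ candidate thresholds $t$ with utility $u(t) = |E(G[S_t])|$ (which has sensitivity $1$ in $G$), returning $t^*$ with $u(t^*) \geq \max_t u(t) - O(\log n/\eps)$; since at least one candidate $S_t$ coincides with $C_{k_{\max} - O(\zeta)}$ up to boundary slack and thus has density $\geq D^*/2-O(\zeta)$, the selected subgraph meets the required $(2, O(\log n/\eps))$ guarantee.
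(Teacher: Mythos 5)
Your algorithm and privacy argument coincide with the paper's (\cref{alg:densest subgraph algorithm}: threshold the approximate core numbers at $\hat{k}_{\max}-O(\log n/\eps)$ and invoke post-processing), but your utility analysis has a genuine gap --- one you correctly flag yourself --- and the idea that closes it is missing. Treating $\hat k$ as a black-box $(1,\zeta)$-approximation only gives the sandwich $C_{k_{\max}}\subseteq S\subseteq C_{k_{\max}-O(\zeta)}$, and this really is insufficient: a labeling can be $(1,\zeta)$-accurate while its top level set $S$ picks up many vertices of $C_{k_{\max}-O(\zeta)}$ whose supporting neighbors all received labels just below the cutoff, so those vertices have near-zero induced degree in $G[S]$ and can drive the density of $G[S]$ far below $k_{\max}/2$. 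No ``degree-loss argument at the boundary'' can be carried out from the approximation guarantee alone.

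The missing ingredient is algorithm-specific: the level sets of the output of \cref{alg:optimal-algorithm} are not arbitrary sets of high-core-number vertices, they are precisely the survivor sets $H$ of the noisy peeling loop at each grid threshold $k$, and claim~($i$) in the proof of \cref{thm:optimal-utility} establishes that every $u\in H$ satisfies $d_{H}(u)\ge k-O(\log n/\eps)$ \emph{within $H$ itself} (since $u$ was not peeled in the last pass and the Laplace noises are bounded w.h.p.). Hence $|E(G[S])|\ge |S|\bigl(k_{\max}-O(\log n/\eps)\bigr)/2$ directly, and dividing by $|S|$ gives density at least $k_{\max}/2-O(\log n/\eps)\ge D^*/2-O(\log n/\eps)$. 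This minimum-induced-degree property of the survivor sets is exactly what your argument needs and what the paper uses. Your exponential-mechanism fallback does not rescue the argument as stated either: selecting the threshold maximizing the raw edge count $|E(G[S_t])|$ favors large sparse candidates (e.g.\ $t=0$ gives all of $G$), so an additive $O(\log n/\eps)$ guarantee on edge count does not translate into the required guarantee on \emph{density}.
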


We also give a differentially private algorithm for the \emph{low out-degree ordering} problem (\cref{def:low-outdeg}). The low out-degree ordering
problem seeks to output an ordering of the vertices in the graph to minimize the out-degree of any vertex when edges are oriented
from vertices earlier in the ordering to later in the ordering. It is well-known that the best ordering(s) gives out-degree upper bounded
by the degeneracy, $d$, of the graph.

\begin{definition}[$(\multfactor, \addfactor)$-Approximate Low Out-Degree Ordering]\label{def:low-outdeg}
    Let $\order = [v_1, v_2, \dots, v_n]$ be a total ordering of nodes in a
    graph $G = (V, E)$.
    The ordering $\order$ is an \defn{$(\multfactor, \addfactor)$-approximate
    low out-degree ordering} if
    orienting edges from earlier nodes to later nodes in $\order$ 
    produces out-degree at most
    $\multfactor \cdot \degen + \addfactor$.
\end{definition}

\begin{theorem}
    There is an $\eps$-(local) edge differentially private algorithm that gives a $(1, O(\log n/\eps))$-approximate
    low out-degree ordering of the vertices. 
\end{theorem}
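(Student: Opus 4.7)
The plan is to directly reuse the $(1, O(\log n / \eps))$-approximate private $k$-core decomposition algorithm from the earlier theorem and, instead of emitting its approximate core numbers, emit the order in which its peeling process removes vertices. Concretely, the private $k$-core algorithm is a noisy iterative peeling: in each phase it removes every vertex whose noisy degree in the surviving subgraph falls below the current threshold. Let $\order = [v_1, \ldots, v_n]$ record this peeling order, breaking ties within a single phase by the publicly known vertex ID, and output $\order$.

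For the out-degree guarantee, the out-neighbors of $v_i$ in $\order$ are precisely the neighbors of $v_i$ that are still alive at the \emph{start} of the phase in which $v_i$ is peeled: any same-phase neighbor with larger ID is still present at that instant, and any later-phase neighbor is trivially present, while earlier-phase neighbors have already been removed and do not contribute. Hence the out-degree of $v_i$ equals its surviving degree at the start of its peeling phase. The peeling rule fires on $v_i$ only when its noisy surviving degree is below the phase's threshold, and the concentration of the underlying MAT mechanism (the same concentration used in the proof of the main $k$-core theorem) guarantees that the true surviving degree deviates from the noisy one by at most $O(\log n / \eps)$ with high probability. Combining this with the approximation guarantee $\hat{k}(v_i) \le k(v_i) + O(\log n / \eps) \le \degen + O(\log n / \eps)$ (using that $k(v) \le \degen$ for every $v$) and taking a union bound over the $n$ vertices yields out-degree at most $\degen + O(\log n / \eps)$, as required.

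Privacy is immediate by post-processing: $\order$ is a deterministic function of the noisy measurements that the $k$-core algorithm already produces, so it inherits $\eps$-(local) edge differential privacy without spending any additional privacy budget. Consequently, no new privacy analysis is needed beyond that of the main $k$-core theorem.

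The main obstacle I anticipate is cleanly tying the threshold used at each phase back to the final released core-number estimate $\hat{k}(v_i)$ at the same deterministic accuracy that is claimed by the main theorem. In particular, one must verify that the MAT concentration applies uniformly across all phases and all vertices, so that both the ``surviving degree $\le$ threshold $+ O(\log n / \eps)$'' inequality and the ``threshold $\le \hat{k}(v_i) + O(\log n / \eps)$'' inequality hold jointly with high probability on a single sample of the noise. Since the proof of the main $k$-core theorem must already establish this coupling in order to certify the released core numbers, the low out-degree result is essentially a corollary of that analysis.
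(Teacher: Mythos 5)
Your proposal matches the paper's construction essentially exactly: the paper's \cref{alg:low-outdeg} is precisely \cref{alg:optimal-algorithm} with vertices appended to an ordered list as they are peeled, privacy follows because the removal sequence is already certified $\eps$-DP by the MAT analysis of \cref{thm:optimal-privacy} (so the ordering is post-processing), and the out-degree bound follows because a vertex's out-degree is at most its surviving induced degree at the moment of removal, which \cref{thm:optimal-utility} bounds by $k(v)+O(\log n/\eps)\le \degen+O(\log n/\eps)$ with high probability. The coupling concern you raise at the end is handled in the paper exactly as you anticipate, by conditioning once on the high-probability event that all Laplace noises are $O(\log n/\eps)$ in magnitude.
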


As stated in our theorems, 
all of the algorithms in our paper can also be implemented in the local edge DP (LEDP) model with the same approximation and additive error guarantees as in the central model.\footnote{For the algorithms that may take $\Omega(n)$ rounds in the local model, we put ``(local)'' in the theorems, since these algorithms are more suitable for the central model.} Furthermore, all of our algorithms can be implemented (centrally) in near-linear time if we lose a multiplicative factor of $1+\eta$ for any given constant $\eta>0$. This nearly matches the best running times for 
sequential non-private $k$-core decomposition algorithms, while maintaining a near-optimal utility for private algorithms.

\subsection{Related Work}\label{sec:relwork}
The three problems our paper studies, $k$-core decomposition, densest subgraph and low out-degree ordering, 
are closely related and have been studied side by side in the literature. For instance, the classic (non-private) $2$-approximation to the densest subgraph problem by Charikar~\cite{Charikar00} works by performing an iterative greedy peeling process. Subsequent work improved the approximation factor by using the multiplicative weight update framework~\cite{BGM14,SuVu20}. There has also been a large body of work on dynamic densest subgraph, with recent breakthrough results of~\cite{sawlani2020near,chekuri20221,chekuri2024adaptive} 
yielding $O(\mathsf{poly}(\log n))$ update time while maintaining a $(1+\eta)$-approximate densest subgraph. 
The works of \cite{DLRSSY22,DKLV23} rely heavily on ideas from this literature. 

On the privacy side, the first results for private densest subgraph are the works of Nguyen and Vullikanti~\cite{NV21} and
Farhadi et al.~\cite{FHO21}. \cite{NV21} give algorithms which achieve a 
$O(2 + \eta, \log(n) \log(1/\delta)/{\eps})$-approximation
under $(\eps, \delta)$-differential privacy. On the other hand, ~\cite{FHO21} give algorithms which achieve a 
$(2+\eta,O({\log^{2.5} (n)\log(1/\sigma)}/{\eps}))$-approximation with probability $1-\sigma$ 
under $\eps$-differential privacy. More recently, \cite{DLRSSY22} gave a $(1+\eta, O(\mathsf{poly}(\log n)/\eps))$
approximation achieving $\eps$-differential privacy by adapting the muliplicative weight update distributed 
algorithms of~\cite{BGM14,SuVu20} to the 
private setting; however, the private algorithm is not distributed and runs in $O((n+m)\log^3 n)$ time.
Both~\cite{NV21} and~\cite{FHO21} provide
lower bounds on the additive error of any private densest subgraph algorithm; in particular, their lower bounds 
state that any $\eps$-DP densest subgraph algorithm achieving high-probability accuracy guarantees
must have additive error $\Omega(\sqrt{\log(n)/\eps})$.
Our densest subgraph result yields the same multiplicative factor as~\cite{NV21, FHO21} but improves
the additive factor; on the other hand, we incur a worse multiplicative factor compared with~\cite{DLRSSY22}
but a better additive factor.

For $k$-core decomposition, to the best of our knowledge \cite{DLRSSY22} is the first
work to study this problem; they give the first private algorithm achieving a $(2 + \eta, O(\mathsf{poly}(\log n)/\eps))$-approximation in the local edge-differential privacy setting by adapting a recently developed {\em parallel} 
locally-adjustable dynamic algorithm from~\cite{LSYDS22}.
Since their approach is based on a parallel algorithm, improving on this result, e.g., to achieve any $(2-\eta)$-approximation seems
challenging due to a known $\mathsf{P}$-completeness result for degeneracy~\cite{anderson84pcomplete} (which reduces to $k$-core decomposition).
We take a different approach in this paper, and adapt a variant of the classical peeling algorithm~\cite{jacm/MatulaB83}, which outputs exact core numbers with high probability. We note that this algorithm is {\em not parallelizable},
and indeed, the depth (i.e., longest chain of sequential dependencies) of this algorithm could be $\Omega(n)$ in the worst case.

\paragraph{Concurrent, Independent Work.} The recent concurrent, independent work of Dinitz et al.~\cite{DKLV23} provides a large set of 
novel results on $(\eps, \delta)$-DP, $\eps$-LEDP, and $(\eps, \delta)$-LEDP densest subgraph, improving on the approximation guarantees of 
previous results. Their work is based on a recent result of Chekuri, Quanrud, and Torres~\cite{CQT22} and the parallel densest subgraph
algorithm of Bahmani, Kumar, and Vassilvitskii~\cite{BKV12}. They offer improvements in the bounds in the central DP setting via the
private selection mechanism of Liu and Talwar~\cite{liu2019private}. Although~\cite{DKLV23} also studies the densest subgraph problem (closely
related to the $k$-core decomposition problem), their techniques are different from those presented in this paper; furthermore, any $c$-approximate
algorithm for densest subgraphs translates to a $2c$-approximate algorithm for finding the degeneracy of the input graph (which is the value
of the maximum $k$-core). Thus, no densest subgraph algorithm can obtain a better than $2$-approximation of the maximum $k$-core value.
\section{Differential Privacy Background}

We give some basic definitions and background on differential privacy in this section. We refer the interested reader to~\cite{dwork2014algorithmic} for proofs of these fundamental concepts. Since we will only be working with graphs, we will state these results in terms of edge differential privacy for graph algorithms; we will also use the terms differential privacy and edge differential privacy interchangeably. We start with the definition of edge-neighboring graphs and edge-differential privacy. Edge-neighboring graphs is a well-studied model for 
differential privacy (see e.g.~\cite{li2023private} for a survey of such results) and protects the privacy of connections with highly sensitive information like 
disease transmission graphs.

\begin{definition}
We say that two graphs $G_1=(V_1,E_1)$ and $G_2=(V_2,E_2)$ are \emph{edge-neighboring} if $V_1=V_2$ and $|E_1\oplus E_2|=1$ (i.e., they have the same vertex set and they differ by exactly one edge). We often use the notation $G_1\sim G_2$ to denote that $G_1,G_2$ are edge-neighboring.
\label{def:edge-nbhr}
\end{definition}

\begin{definition}
Let $\mathcal{G}$ denote the set of undirected graphs. We say an algorithm $\mathcal{M}:\mathcal{G}\to\mathcal{Y}$ is \emph{$(\eps,\delta)$-edge differentially private} if for all edge-neighboring graphs $G\sim G^\prime$ and every $S\subseteq\mathcal{Y}$, we have $$\Pr[\mathcal{M}(G)\in S]\le\exp(\eps)\cdot\Pr[\mathcal{M}(G^\prime)\in S]+\delta.$$
If $\delta=0$, we say $\mathcal{M}$ is $\eps$-edge differentially private.
\end{definition}

Next, we define two important properties of differential privacy: composition and post-processing. The composition property enables us to easily combine private subroutines into a larger algorithm which maintains some (weaker) privacy guarantees. The post-processing property guarantees that the output of a private algorithm cannot be post-processed to become less private without using information from the private data.

\begin{lemma}
Let $\mathcal{M}_1:\mathcal{G}\to\mathcal{Y}_1$ and $\mathcal{M}_2:\mathcal{G}\to\mathcal{Y}_2$ be $\eps_1$- and $\eps_2$-edge differentially private mechanisms, respectively. Then $\mathcal{M}:\mathcal{G}\to\mathcal{Y}_1\times\mathcal{Y}_2$ defined by $\mathcal{M}=(\mathcal{M}_1,\mathcal{M}_2)$ is $(\eps_1+\eps_2)$-edge differentially private.
\label{lem:basic-comp}
\end{lemma}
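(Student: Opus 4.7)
The plan is to prove basic composition by a direct union-of-slices argument: decompose any event $S \subseteq \mathcal{Y}_1 \times \mathcal{Y}_2$ into its fibers under the first coordinate, bound each fiber using the privacy of $\mathcal{M}_2$, and then integrate over the first coordinate using the privacy of $\mathcal{M}_1$. The essential structural fact is that $\mathcal{M}_1$ and $\mathcal{M}_2$ use independent sources of randomness, so conditioning on the value of the first output does not change the distribution of the second. I would state this independence explicitly up front, since it is what licenses the factorization of the joint distribution.

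Concretely, I would fix edge-neighboring graphs $G \sim G'$ and any measurable $S \subseteq \mathcal{Y}_1 \times \mathcal{Y}_2$, and write $S_{y_1} = \{y_2 \in \mathcal{Y}_2 : (y_1, y_2) \in S\}$ for the slice above $y_1$. Then by independence
\[
\Pr[\mathcal{M}(G) \in S] = \int_{\mathcal{Y}_1} \Pr[\mathcal{M}_2(G) \in S_{y_1}]\, d\Pr[\mathcal{M}_1(G) = y_1].
\]
For each $y_1$, the $\eps_2$-privacy of $\mathcal{M}_2$ gives $\Pr[\mathcal{M}_2(G) \in S_{y_1}] \le \exp(\eps_2)\, \Pr[\mathcal{M}_2(G') \in S_{y_1}]$. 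Pulling this factor out, it remains to bound
\[
\int_{\mathcal{Y}_1} \Pr[\mathcal{M}_2(G') \in S_{y_1}]\, d\Pr[\mathcal{M}_1(G) = y_1]
\]
by $\exp(\eps_1)$ times the same integral against $d\Pr[\mathcal{M}_1(G') = y_1]$, which follows from $\eps_1$-privacy of $\mathcal{M}_1$ applied to the measurable function $y_1 \mapsto \Pr[\mathcal{M}_2(G') \in S_{y_1}]$ (equivalently, write this as $\Pr[\mathcal{M}_1(G) \in T]$ for an appropriate event $T$ in the discrete case, or approximate by simple functions in general). Combining the two bounds yields $\Pr[\mathcal{M}(G) \in S] \le \exp(\eps_1 + \eps_2)\, \Pr[\mathcal{M}(G') \in S]$, as desired.

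In the discrete setting (which is the case for all mechanisms used in this paper, since outputs are integer-valued noisy quantities), the argument simplifies to a one-line calculation: for each $(y_1, y_2) \in \mathcal{Y}_1 \times \mathcal{Y}_2$, independence gives $\Pr[\mathcal{M}(G) = (y_1,y_2)] = \Pr[\mathcal{M}_1(G) = y_1]\,\Pr[\mathcal{M}_2(G) = y_2]$, and each factor is bounded by $\exp(\eps_i)$ times the corresponding probability under $G'$, so summing over $(y_1,y_2) \in S$ completes the proof.

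The main subtlety, not a serious obstacle but the one place the argument can go wrong, is to handle non-product events $S$: a reader might be tempted to verify the bound only on rectangles $S_1 \times S_2$. The fiber decomposition above is the clean way to extend to arbitrary measurable $S$ without assuming product structure, and the appeal to $\eps_1$-privacy of $\mathcal{M}_1$ must be made against the post-processed event $\{y_1 : \mathcal{M}_2(G') \in S_{y_1}\}$ rather than against some fixed set, which is why independence of the two mechanisms is essential.
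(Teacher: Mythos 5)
Your proof is correct: the paper itself states this lemma without proof, deferring to the textbook of Dwork and Roth, and your fiber-decomposition argument (factor the joint law by independence, apply $\eps_2$-privacy slice-wise, then apply $\eps_1$-privacy to the resulting nonnegative integrand via simple-function approximation) is exactly the standard proof given there. The one point worth keeping explicit, as you do, is that the two mechanisms use independent randomness, which is what licenses the factorization for non-product events $S$.
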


\begin{lemma}
Let $\mathcal{M}:\mathcal{G}\to\mathcal{Y}$ be an $\eps$-edge differentially private mechanism.
Let $f:\mathcal{Y}\to\mathcal{Z}$ be an arbitrary randomized mapping. Then $f\circ\mathcal{M}:\mathcal{G}\to\mathcal{Z}$ is still $\eps$-edge differentially private.\label{lem:post}
\end{lemma}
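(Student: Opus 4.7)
The plan is to reduce the randomized post-processor $f$ to a family of deterministic post-processors indexed by its internal randomness, and then apply the definition of $\eps$-edge differential privacy directly to the preimage of the target set. Since $f$'s internal randomness is independent of the graph $G$ and of the randomness used by $\mathcal{M}$, averaging over the internal randomness preserves the $\eps$-DP inequality.

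Concretely, fix edge-neighboring graphs $G\sim G'$ and any measurable $S\subseteq\mathcal{Z}$. Represent the randomized mapping $f$ as a deterministic map $f_r:\mathcal{Y}\to\mathcal{Z}$ where $r$ is drawn from some distribution over an auxiliary randomness space $\mathcal{R}$, independent of $\mathcal{M}$. For each fixed $r$, the set $T_r := f_r^{-1}(S)\subseteq\mathcal{Y}$ is a valid event in the output space of $\mathcal{M}$, so by the $\eps$-edge DP guarantee of $\mathcal{M}$ we have
\[
\Pr[\mathcal{M}(G)\in T_r]\le \exp(\eps)\cdot\Pr[\mathcal{M}(G')\in T_r].
\]
Then I would write
\[
\Pr[f(\mathcal{M}(G))\in S] = \int_{\mathcal{R}} \Pr[\mathcal{M}(G)\in T_r]\, d\Pr(r),
\]
use the pointwise bound above inside the integral, and conclude
\[
\Pr[f(\mathcal{M}(G))\in S] \le \exp(\eps)\int_{\mathcal{R}} \Pr[\mathcal{M}(G')\in T_r]\, d\Pr(r) = \exp(\eps)\cdot\Pr[f(\mathcal{M}(G'))\in S].
\]
Since $S$ was arbitrary, this is exactly the $\eps$-edge DP guarantee for $f\circ\mathcal{M}$.

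The only potentially subtle step is formalizing the representation of a randomized mapping as a deterministic map with independent auxiliary randomness, plus a measurability check that $T_r=f_r^{-1}(S)$ is a valid event in $\mathcal{Y}$. In the discrete graph setting relevant to this paper, both issues are immediate: we can take $\mathcal{R}$ to be the (countable or Borel) space of random bits used by $f$, and all relevant $\sigma$-algebras are standard, so the integral step is just a sum or a routine application of Fubini. No novel technique is required; the argument is the textbook post-processing proof specialized to edge-neighboring graphs, and the independence of $f$'s randomness from $\mathcal{M}$'s randomness is the only property being used.
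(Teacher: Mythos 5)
Your proof is correct and is exactly the standard post-processing argument from Dwork--Roth that the paper defers to (the paper states this lemma without proof, citing \cite{dwork2014algorithmic}): decompose the randomized $f$ into deterministic maps $f_r$ with independent auxiliary randomness, apply the $\eps$-DP guarantee of $\mathcal{M}$ to each preimage $T_r = f_r^{-1}(S)$, and average. Nothing is missing; the measurability and independence caveats you flag are handled exactly as you describe in the discrete setting relevant here.
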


\subsection{Local Edge Differential Privacy}

The local model of differential privacy studies the setting where all private information are kept private by the individual parties, i.e.\ there exists no trusted curator. For graphs, the local model is called local edge differential privacy as defined in~\cite{DLRSSY22}. 
We now define this formally.

\begin{definition}\label{def:local-randomizer}
    An \defn{$\eps$-local randomizer} $R: \adj \rightarrow \rangeout$ for node $v$ is an $\eps$-edge differentially private 
    algorithm that takes as input the set of its neighbors $N(v)$, represented by
    an adjacency list $\adj = (b_1, \dots, b_{|N(v)|})$. 
\end{definition}

In other words, a local randomizer for node $v$ is an edge-private algorithm which is run on the neighborhood of $v$. The information released via local randomizers is public to all nodes and the curator. 
The curator performs some computation on the released information and makes the result public. The overall computation is formalized via the notion of the transcript. 

\begin{definition}\label{def:LEDP}
A \emph{transcript} $\pi$ is a vector consisting of 4-tuples $(S^t_U, S^t_R, S^t_\eps, S^t_Y)$ -- encoding the set of parties chosen, set of local randomizers assigned, set of randomizer privacy parameters, and set of randomized outputs produced -- for each round $t$. Let $S_\pi$ be the collection of all transcripts and $S_R$ be the collection of all randomizers. Let $\text{STOP}$ denote a special character indicating that the local randomizer's computation halts.
A \emph{protocol} is an algorithm $\alg: S_\pi \to 
(2^{[n]} \times 2^{S_R} \times 2^{\mathbb{R}^{\geq 0}} \times 2^{\mathbb{R}^{\geq 0}})\; \cup \{\text{STOP}\}$
mapping transcripts to sets of parties, randomizers, and randomizer privacy parameters. The length of the transcript, as indexed by $t$, is its round complexity. Given $\eps> 0$,  a randomized protocol $\alg$ on graph $G$ is \emph{$\eps$-local edge differentially private} if the algorithm that outputs the entire transcript generated by $\alg$ is $\eps$-edge differentially private on graph $G.$
\end{definition}

The definition is difficult to parse, but it naturally corresponds with the intuition of dealing with an untrusted curator (see also the discussion in~\cite{DKLV23}). At the beginning, the curator only knows the node set $V$ and each node knows its neighborhood. In each round, a subset of the nodes perform some computation using local randomizers on their local edge information, the outputs from previous rounds, and the public information. The nodes then output something which can be seen by all other nodes and the curator. This released information is differentially private since it is released by local randomizers. It is the curator's choice for whom to query, what local randomizer they use, and what privacy parameters the randomizer uses. Since the curator's choice only depends on the transcripts from the previous rounds, this is exactly the definition of a protocol in the above definition.
\section{Generalized Sparse Vector Technique}\label{sec:mat}

We introduce a simple, novel multidimensional generalization of the AboveThreshold mechanism, which is the basis of the algorithms in this paper. In the standard AboveThreshold mechanism from Section 3.6 in~\cite{dwork2014algorithmic}, we are given as input a threshold $T$ and an adaptive sequence of sensitivity $1$ queries $f_1,f_2,\ldots$, and the goal is to output the first query which exceeds the threshold. In the multidimensional version, we have a $d$-dimensional vector of thresholds $\vec{T}=(T_1,\ldots,T_d)$ and an adaptive sequence of sensitivity $\Delta$ vector-valued queries $\vec{f}_1,\vec{f}_2,\ldots$, and the goal is to output the first query for which the $j^{th}$ coordinate exceeds the threshold $T_j$, for each $j\in[d]$. Here $\Delta$ is defined as the sensitivity for \emph{all} vectors.
We give a natural generalization of the 1-dimensional to the multidimensional AboveThreshold mechanism in \Cref{alg:multidimensional Above Threshold} and give its privacy proof below.

\begin{algorithm}[t]
\caption{Multidimensional AboveThreshold (MAT)}
\label{alg:multidimensional Above Threshold}
\textbf{Input:} Private graph $G$, adaptive queries $\{\vec{f}_1, \dots, \vec{f}_n\}$, threshold vector $\vec{T}$, privacy $\eps$, 
sensitivity $\Delta$.\\
\textbf{Output:} A sequence of responses $\{\vec{a}_1, \dots, \vec{a}_n\}$ where $a_{i,j}$ indicates if $f_{i,j}(G)\ge \vec{T}_j$\\
\begin{algorithmic}[1]
\FOR{$j=1,\ldots,d$}
    \STATE $\hat{T}_j\leftarrow \vec{T}_j+\text{Lap}(2\Delta/\eps)$
\ENDFOR
\STATE
\FOR{each query $\vec{f}_i \in \{\vec{f}_1, \dots, \vec{f}_n\}$}
\FOR{$j=1,\ldots,d$}
\STATE Let $\nu_{i,j}\leftarrow\text{Lap}(4\Delta/\eps)$
\IF{$f_{i,j}(G)+\nu_{i,j}\ge \hat{T}_j$}
\STATE Output $a_{i,j}=\top$
\STATE Stop answering queries for coordinate $j$
\ELSE
\STATE Output $a_{i,j}=\bot$
\ENDIF
\ENDFOR
\ENDFOR
\end{algorithmic}
\end{algorithm}

\begin{theorem}
    \Cref{alg:multidimensional Above Threshold} is $\eps$-differentially private.\label{thm:privacy AboveThreshold}
\end{theorem}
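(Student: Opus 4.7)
The plan is to adapt the classical privacy proof of the single-dimensional AboveThreshold mechanism (Theorem 3.23 of Dwork--Roth) to the multidimensional setting. The core observation is that the threshold noises $\hat{T}_j$ and query noises $\nu_{i,j}$ are drawn mutually independently across $i$ and $j$, so \Cref{alg:multidimensional Above Threshold} may be viewed as $d$ parallel instances of standard AboveThreshold, one per coordinate, and analyzed coordinate-by-coordinate.

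First I would fix two edge-neighboring graphs $G \sim G'$ and condition on a specific output transcript $\pi = (a_{i,j})$. For each coordinate $j$, let $I_j$ denote the halting index, i.e., the smallest $i$ with $a_{i,j}=\top$ (or $I_j=\infty$ if coordinate $j$ never halts, which includes the case where no threshold is ever crossed). By the independence of the noise variables across coordinates, the transcript probability factors as $\Pr[\pi\mid G] = \prod_{j=1}^{d}\Pr[\pi_j\mid G]$, where $\pi_j$ is the coordinate-$j$ sub-transcript. Bounding the ratio $\Pr[\pi\mid G]/\Pr[\pi\mid G']$ thus reduces to bounding the product of per-coordinate ratios.

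For each coordinate $j$, I would then invoke the standard AboveThreshold coupling. Introduce a shift $c_j$ capturing the sensitivity change of the halting query, couple the $G'$-execution by setting $\hat{T}_j' = \hat{T}_j + c_j$ and shifting the halting-query noise $\nu_{I_j,j}$ correspondingly, and verify that every non-halting query still outputs $\bot$ (via the sensitivity bound together with the threshold shift) while the halting query continues to output $\top$ in the coupled execution. The Laplace density ratios induced by these shifts, using the noise scales $\mathrm{Lap}(2\Delta/\eps)$ and $\mathrm{Lap}(4\Delta/\eps)$, produce a per-coordinate multiplicative factor scaling with $|c_j|\eps/\Delta$.

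The main obstacle is combining these coordinate-wise ratios into a total bound of $\exp(\eps)$ rather than the naive $\exp(d\eps)$ from sequential composition. This requires interpreting the stated ``sensitivity for all vectors'' $\Delta$ as an $\ell_1$ bound on the aggregate shift vector $\vec{c} = (c_1, \ldots, c_d)$, i.e., $\sum_j |c_j| \le \Delta$, so that $\prod_j \exp(|c_j|\eps/\Delta) \le \exp(\eps)$. Justifying this bound is delicate because the halting indices $I_j$ can differ across coordinates, so the shift sums contributions from several distinct queries; the proof must therefore either exploit the vector-valued $\ell_1$ sensitivity directly or adopt a refined coupling that processes all $d$ coordinates simultaneously at each query step and charges the total shift against the $\ell_1$ sensitivity of a single vector-valued query at a time.
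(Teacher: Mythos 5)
Your proposal follows essentially the same route as the paper's proof: reduce to the standard AboveThreshold coupling, applied to all $d$ coordinates at once, and charge the aggregate shift against the $\ell_1$ sensitivity $\Delta$ of the vector-valued queries so that the per-coordinate losses combine into $\exp(\eps)$ rather than $\exp(d\eps)$. The one step you explicitly leave open --- bounding $\sum_j |c_j|$ by $\Delta$ when the halting indices $r(j)$ differ across coordinates --- is exactly where the paper's argument does its work, and it takes the second of the two options you sketch: a single simultaneous change of variables. Concretely, after conditioning on the non-halting noises, the paper sets $g_j(G) = \max\{f_{i,j}(G)+\nu_{i,j} : i < r(j)\}$ and shifts $t_j$ by $g_j(G)-g_j(G')$ and $v_j$ by $g_j(G)-g_j(G')+f_{r(j),j}(G')-f_{r(j),j}(G)$; since the joint density of the $d$ independent Laplace variables depends only on the $\ell_1$ norm of the shift vector, the bounds $\|\vec{t}'-\vec{t}\|_1\le\Delta$ and $\|\vec{v}'-\vec{v}\|_1\le 2\Delta$ yield the factor $\exp(\eps/2)\cdot\exp(\eps/2)=\exp(\eps)$. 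Your caution about this step is warranted: one has $|g_j(G)-g_j(G')|\le\max_{i<r(j)}|f_{i,j}(G)-f_{i,j}(G')|$, and summing over $j$ is controlled by $\Delta$ only if the coordinate-wise suprema, taken over possibly \emph{different} query indices in different coordinates, together have $\ell_1$ norm at most $\Delta$ --- a slightly stronger requirement than per-query $\ell_1$ sensitivity, which the paper asserts without elaboration but which holds in all of its applications (an edge change perturbs only the two endpoint coordinates of any query, each by at most one). To complete your plan you would carry out this simultaneous change of variables and verify that aggregate $\ell_1$ bound for the query class at hand; with that done, your argument and the paper's coincide.
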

\begin{proof}
Fix $G\sim G^\prime$ to be arbitrary edge-neighboring graphs. We first define some notation for simplicity in the remainder of the proof. Let $\mathcal{A}(G)$ and $\mathcal{A}(G')$ denote the random variable representing the output of \Cref{alg:multidimensional Above Threshold} when run on $G$ and $G'$, respectively. The output of the algorithm is a realization of the random variables, where for each coordinate $j\in [d]$, the output is $a_{\cdot j}\in\{\top,\bot\}^{r(j)}$ and has the form that for all $i<r(j)$, $a_{i,j}=\bot$ and $a_{r(j),j}=\top$. Our goal will be to show that
for any output $a$, we have
$$\Pr[\mathcal{A}(G)=a]\le \exp(\eps)\cdot\Pr[\mathcal{A}(G')=a].$$
Observe that there are two sources of randomness in the algorithm: the noisy thresholds $\hat{T}_j$ for each $j\in[d]$ and the perturbations of the queries $\nu_{i,j}$ for each $j\in[d]$, $i\in[r(j)]$. We will fix the random variables $\nu_{i,j}$ for each $j\in[d]$ and $i<r(j)$ by simply conditioning on their randomness; for neighboring graphs, we have a coupling between corresponding variables. 
For simplicity, we omit this from notation. The remaining random variables are $\hat{T}_j$ and $\nu_{r(j),j}$ for each $j\in [d]$; we will be taking probabilities over this randomness, but this will remain implicit in our notation. 
The main observation needed is that the output of $\mathcal{A}$ is uniquely determined by the iterations $r(j)$ of the first time we have $f_{i,j}(G)+\nu_{i,j}\ge \hat{T}_j$ for each $j\in [d]$. Thus, we can define the (deterministic, due to conditioning on the probabilities) quantity
$$g_j(G)\coloneqq \text{max}\{f_{i,j}(G)+\nu_{i,j}:i<r(j)\}$$
for each $j\in[d]$ so that the event that $\mathcal{A}(G)=a$ is equivalent to the event where 
\begin{align}
    f_{r(j),j}(G)+\nu_{r(j),j} \geq \hat{T}_j\text{ and } \hat{T}_j > g_j(G)\text{ for each } j\in[d].\label{eq:event}   
\end{align}
Let $p(\cdot)$ and $q(\cdot)$ denote the density functions of the random vectors consisting of $\hat{T}(j)$ and $\nu_{r(j),j}$ for each $j\in[d]$. Given the above observation, we can write the following, where $\hat{T}_j\in(f_{r(j),j}(G)+\nu_{r(j),j},g_j(G))$ indicates the aforementioned event,
\begin{align}
    \Pr[\mathcal{A}(G)=a]&=\Pr[\hat{T}_j\in(f_{r(j),j}(G)+\nu_{r(j),j},g_j(G))\text{ $\forall j\in [d]$}]\nonumber\\
    &=\int_{\mathbb{R}^d}\int_{\mathbb{R}^d}p(\vec{t})q(\vec{v})\cdot\mathbbm{1}\{t_j\in(f_{r(j),j}(G)+v_j,g_j(G))\text{ $\forall j\in [d]$}\}\,d\vec{v}\,d\vec{t}\label{eq:star}
\end{align}
Now, we make a change of variables. Define 
\begin{align*}
    \vec{v}^\prime_j&=\vec{v}_j+g_j(G)-g_j(G^\prime)+f_{r(j),j}(G')-f_{r(j),j}(G)\\
    \vec{t}^\prime_j&=\vec{t}_j+g_j(G)-g_j(G^\prime)
\end{align*}
for each $j\in [d]$. Since we have conditioned on the randomness of $\nu_{i,j}$ for each $j\in[d]$ and $i<r(j)$, we have $\|\vec{v}^\prime-\vec{v}\|_1\le 2\Delta$ and $\|\tilde{t}^\prime-\tilde{t}\|_1\le \Delta$ since the vectors $f_{i\cdot}(G)$ have sensitivity $\Delta$, implying that the vectors $g_{\cdot}(G)$ also have sensitivity $\Delta$. Also observe that the change of variables only changes the vectors $\vec{v}$ and $\vec{t}$ by a constant (since we are still conditioning on the randomness of $\nu_{i,j}$) so $d\vec{v}\,d\vec{t}=d\vec{v}'\,d\vec{t}'$. Thus, we can apply the change of variables to rewrite (\ref{eq:star}) as: 
\begin{align}
    \int_{\mathbb{R}^d}\int_{\mathbb{R}^d}p(\vec{t}^\prime)q(\vec{v}^\prime)\mathbbm{1}\{{t}^\prime_j\in(f_{r(j),j}(G)+{v}^\prime_j,g_j(G))\text{ $\forall j\in [d]$}\}\,d\vec{v}\,d\vec{t}.\nonumber
\end{align}
Plugging in the definitions of $t'_j$ and $v'_j$ and simplifying, this is equivalent to
\begin{align}
    \int_{\mathbb{R}^d}\int_{\mathbb{R}^d}p(\vec{t}^\prime)q(\vec{v}^\prime)\mathbbm{1}\{{t}_j\in(f_{r(j),j}(G^\prime)+{v}_j,g_j(G^\prime))\text{ $\forall j\in [d]$}\}\,d\vec{v}\,d\vec{t}.\nonumber
\end{align}
Now, recall that we showed $\|\vec{v}^\prime-\vec{v}\|_1\le 2\Delta$ and $\|\vec{t}^\prime-\vec{t}\|_1\le \Delta$. This implies that $p(\vec{t}^\prime)\le\exp(\eps/2)\cdot p(\vec{t})$ and $q(\vec{v}^\prime)\le\exp(\eps/2)\cdot q(\vec{v})$ by the properties of the Laplace distribution. We can thus upper bound the above as
\begin{align}
    \exp(\eps)\cdot\int_{\mathbb{R}^d}\int_{\mathbb{R}^d}p(\vec{t})q(\vec{v})\mathbbm{1}\{t_j\in(f_{r(j),j}(G^\prime)+{v}_j,g_j(G^\prime))\text{ $\forall j\in [d]$}\}\,d\vec{v}\,d\vec{t}\nonumber
\end{align}
By the definition of probability, the above is exactly equal to
\begin{align*}
    \exp(\eps)\cdot\textstyle\Pr[\hat{T}_j\in(f_{r(j),j}(G^\prime)+\nu_{r(j),j},g_j(G^\prime))\text{ $\forall j\in [d]$}].
\end{align*}
Finally, by our observation in (\ref{eq:event}), this is equal to
\begin{align*}
    \exp(\eps)\cdot\Pr[\mathcal{A}(G')=a].
\end{align*}
Combining the sequence of equalities and inequalities, we have shown that
$$\Pr[\mathcal{A}(G)=a]\le \exp(\eps)\cdot\Pr[\mathcal{A}(G')=a].$$
Since $G,G^\prime$ were arbitrary edge-neighboring graphs, this completes the proof.     
\end{proof}

\subsection{Local Privacy Guarantees}

In our algorithms, we apply the multidimensional AboveThreshold mechanism in the special case where the queries $\vec{f}_i$ have $d=n$ coordinates, and each coordinate of the query corresponds to a node $u$ in the graph $G$ and only depends on the edges $e=(u,v)$ for $v\in V-\{u\}$. In other words, we have an instance of the AboveThreshold on each node $u\in V$, where the data used for the AboveThreshold instance is only the local (edge) data of $u$\footnote{Note that in this setting, an intuitive way to see the privacy guarantee is to view the multidimensional mechanism as a parallel composition of $n$ standard AboveThreshold mechanisms. Since each edge $(u,v)$ is only used in two of the AboveThreshold mechanisms, this implies $2\eps$-differential privacy by parallel composition. However, this is not immediately correct since the queries $\vec{f}_i$ are adaptive on the output of all $n$ coordinates of the previous queries, which is not allowed in standard parallel composition. This would require a concurrent version of parallel composition, which does not yet exist.}. We will show that in this setting, the multidimensional AboveThreshold mechanism can be implemented locally to satisfy local edge-differential privacy. 

For clarity of notation, we index the coordinates of the queries and threshold vector by nodes $v\in V$ instead of indices $j\in[n]$. That is, each query consists of $\vec{f}_{i,v}$ for $v\in V$ and the threshold vector consists of $\vec{T}_{v}$ for $v\in V$. We will now present the changes needed for the local implementation. In Lines 1--3, let each node $u\in V$ compute and store its noisy threshold $\hat{T}_u$ using the public threshold $\vec{T}_u$. Then for each query $\vec{f}_i$ in lines 5--15, each node $u\in V$ can sample its own noise $\nu_{i,u}$ and check the condition $f_{i,u}(G)+\nu_{i,u}\ge\hat{T}_u$. The pseudocode is given in \Cref{alg:local multidimensional Above Threshold}, and we now show that this is locally edge-differentially private.

\begin{algorithm}[t]
\caption{Local Multidimensional AboveThreshold}
\label{alg:local multidimensional Above Threshold}
\textbf{Input:} Private graph $G$, adaptive queries $\{\vec{f}_i\}$, threshold vector $\vec{T}$, privacy $\eps$, 
sensitivity $\Delta$.\\
\textbf{Output:} A sequence of responses $\{\vec{a}_i\}$ where $a_{i,j}$ indicates if $f_{i,j}(G)\ge \vec{T}_j$\\
\begin{algorithmic}[1]
\FOR{$v\in V$}
    \STATE Node $v$ computes $\hat{T}_v\leftarrow \vec{T}_v+\text{Lap}(2\Delta/\eps)$ and stores it
\ENDFOR
\STATE
\FOR{each query $\vec{f}_i$}
\FOR{$v\in V$}
\STATE Node $v$ samples $\nu_{i,v}\leftarrow\text{Lap}(4\Delta/\eps)$
\IF{$f_{i,v}(G)+\nu_{i,v}\ge \hat{T}_v$}
\STATE Node $v$ outputs $a_{i,j}=\top$
\STATE Node $v$ outputs $\text{STOP}$, and stops answering queries
\ELSE
\STATE Node $v$ outputs $a_{i,v}=\bot$
\ENDIF
\ENDFOR
\ENDFOR
\end{algorithmic}
\end{algorithm}

\begin{theorem}\label{thm:mat-ledp}
    \Cref{alg:local multidimensional Above Threshold} is $\eps$-locally edge differentially private.
\end{theorem}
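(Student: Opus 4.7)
The plan is to leverage \Cref{thm:privacy AboveThreshold} together with the observation that \Cref{alg:local multidimensional Above Threshold} is simply a distributed implementation of \Cref{alg:multidimensional Above Threshold} in which the computation of the $v$-th coordinate is pushed to node $v$. By \Cref{def:LEDP}, it suffices to exhibit a protocol whose transcript has the same distribution as the output of \Cref{alg:local multidimensional Above Threshold} and whose transcript-valued mechanism is $\eps$-edge differentially private.

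First, I would cast \Cref{alg:local multidimensional Above Threshold} as a protocol in the sense of \Cref{def:LEDP}. At round $0$, each node $v \in V$ is invoked with a trivial-input local randomizer that outputs $\hat{T}_v = \vec{T}_v + \text{Lap}(2\Delta/\eps)$ (using the public threshold $\vec{T}_v$); this post-processing of public data is vacuously $0$-edge differentially private. Then at each subsequent round $i$, for every node $v$ that has not yet output STOP, the curator assigns $v$ the local randomizer $R_{i,v}$ that takes as input $v$'s adjacency list $\adj$, reads $\hat{T}_v$ and the query description $\vec{f}_i$ from the public transcript, samples fresh $\nu_{i,v} \sim \text{Lap}(4\Delta/\eps)$, computes $f_{i,v}(G)$, and outputs $\top$ (together with STOP) if $f_{i,v}(G) + \nu_{i,v} \ge \hat{T}_v$ and $\bot$ otherwise. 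By the setup preceding the algorithm, $f_{i,v}(G)$ depends only on edges incident to $v$, so $R_{i,v}$ indeed accesses only $v$'s local edge information, as required by \Cref{def:local-randomizer}.

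Second, I would argue that the transcript produced by this protocol is distributed identically to the output of the centralized \Cref{alg:multidimensional Above Threshold}. Both use the same Laplace distributions for $\hat{T}_v$ and $\nu_{i,v}$, the same query values $f_{i,v}(G)$, and the same decision rule; the only difference is \emph{who} draws the noise and performs the comparison. Since the coordinates in \Cref{alg:multidimensional Above Threshold} are processed independently (conditional on the public sequence of queries, which in the reduction is simulated by the public transcript), the joint distribution of outputs across all $v$ and all rounds matches exactly.

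Third, I would invoke \Cref{thm:privacy AboveThreshold} to conclude: the mechanism that outputs the entire transcript of the protocol is $\eps$-edge differentially private on $G$, since it is equidistributed with a run of \Cref{alg:multidimensional Above Threshold} (which is $\eps$-edge DP). By \Cref{def:LEDP}, this is precisely the definition of $\eps$-local edge differential privacy, completing the proof. The only real obstacle is the conceptual one of verifying that the adaptive public information each $R_{i,v}$ reads (noisy thresholds and past $\bot/\top$ outputs from other nodes) is admissible: these are outputs of earlier local randomizers and are post-processed public data, so they can be used freely without further privacy cost, and thus do not affect the privacy analysis already carried out for \Cref{alg:multidimensional Above Threshold}.
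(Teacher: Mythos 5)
Your overall strategy is the same as the paper's: cast \Cref{alg:local multidimensional Above Threshold} as a protocol, observe its transcript is determined by (indeed, a post-processing of) the output of \Cref{alg:multidimensional Above Threshold}, and invoke \Cref{thm:privacy AboveThreshold} together with \Cref{lem:post} and \Cref{def:LEDP}. The handling of adaptivity (the curator's choices and each randomizer's auxiliary inputs are functions of previously released outputs and public data) also matches the paper's argument.

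There is, however, one genuine error in your round-$0$ construction: you have each node's initial local randomizer \emph{output} $\hat{T}_v = \vec{T}_v + \text{Lap}(2\Delta/\eps)$, which places the noisy threshold into the transcript. The paper's \Cref{alg:local multidimensional Above Threshold} deliberately has node $v$ compute $\hat{T}_v$ and \emph{store} it privately; it is never released. This is not a cosmetic distinction. First, it breaks your equidistribution claim: the transcript would then contain the vector of noisy thresholds, which is not part of the output of \Cref{alg:multidimensional Above Threshold}, so \Cref{thm:privacy AboveThreshold} no longer covers it. Second, and more seriously, the sparse vector technique is \emph{not} private if the noisy threshold is revealed. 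The proof of \Cref{thm:privacy AboveThreshold} integrates over $\hat{T}_j$ and applies the change of variables $\vec{t}^\prime_j = \vec{t}_j + g_j(G) - g_j(G^\prime)$, shifting the threshold by a data-dependent amount; this is only legitimate because $\hat{T}_j$ is hidden. If $\hat{T}_v$ is published, then conditioned on its value each $\bot$ answer is an independent noisy comparison costing $\Theta(\eps)$ of budget, and the total privacy loss grows linearly in the number of $\bot$ answers (e.g., take $T=0$, a large negative realized $\hat{T}$, and queries that differ by $1$ on neighboring graphs: the likelihood ratio of $k$ consecutive $\bot$'s is $e^{\Theta(k\eps)}$). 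So the fix is simply to keep $\hat{T}_v$ in node $v$'s private state, as the paper does; with that correction your argument goes through and coincides with the paper's proof.
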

\begin{proof}
    First, we need to show that the local randomizers are in fact edge-differentially private. This can be seen because each randomizers' output is a subset of the output of \Cref{alg:multidimensional Above Threshold}. In other words, the output is a post-processing of \Cref{alg:multidimensional Above Threshold}, so the privacy guarantees follow from \Cref{thm:privacy AboveThreshold} and \Cref{lem:post}. Next, we argue that the transcript is also $\eps$-edge differentially private. Recall that the transcript consists of the set of parties chosen, set of local randomizers assigned, set of privacy parameters assigned, and the set of outputs. In the algorithm, the set of parties chosen, set of local randomizers assigned, and set of privacy parameters assigned are all functions of the outputs from the previous rounds and the public information. Thus, it suffices to prove that the outputs are differentially private by post-processing (\Cref{lem:post}). But this was already proven in \Cref{thm:privacy AboveThreshold}, so we are done.
\end{proof}
\section{Differentially Private $k$-Core Decomposition}

In this section, we give our improved algorithm for differentially private $k$-core decomposition. We first present a variant of the classical (non-private) algorithm for $k$-core decomposition in \Cref{sec:optimal-variant}, and then show how to make it private in \Cref{sec:optimal-private}. Finally, we show how to implement it in near-linear time in \Cref{sec:optimal-efficient}, while incurring only $(1+\eta)$-multiplicative error.

\subsection{A Variation of the Classical Algorithm}
\label{sec:optimal-variant}

The classical peeling algorithm begins with the full vertex set $V$, which is the $0$-core of the graph. Given the $(k-1)$-core, the algorithm computes the $k$-core via an iterative peeling process: the algorithm repeatedly removes all nodes $v$ for which the induced degree is less than $k$, and labels the nodes which remain as being part of the $k$-core. Running this for $k$ from 1 up to $n$ gives the full algorithm, the pseudocode of which is given in \Cref{alg:classical-algorithm}.

\begin{algorithm}[h]
\caption{Threshold-Based $k$-core Decomposition Algorithm}
\label{alg:classical-algorithm}
\textbf{Input:} Graph $G=(V,E)$.\\
\textbf{Output:} $k$-core value of each node $v\in V$\\
\begin{algorithmic}[1]
\STATE Initialize $V_0\leftarrow V$, $t\leftarrow 0$, $\hat{k}(v)\leftarrow 0$ for all $v\in V$ 
\FOR{$k=1,\ldots,n$}\label{line:for-k}
\REPEAT
\STATE $t\leftarrow t+1$, $V_t\leftarrow V_{t-1}$
\FOR{$v\in V_{t-1}$}
\IF{$d_{V_{t-1}}(v)< k$}\label{line:classical if condition}
\STATE $V_t\leftarrow V_t-\{v\}$\label{line:classical peeled}
\ENDIF{}
\ENDFOR{}
\UNTIL{$V_{t-1}-V_{t} = \emptyset$}
\STATE
\STATE Update the core numbers $\hat{k}(v)\leftarrow k$ for all nodes $v\in V_t$
\ENDFOR
\end{algorithmic}
\end{algorithm}

\begin{theorem}
    For each $v\in V$, the output $\hat{k}(v)$ given by \Cref{alg:classical-algorithm} is the core number of $v$.
\end{theorem}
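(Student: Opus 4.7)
The plan is to prove by induction on $k$ the invariant that at the end of iteration $k$ of the outer \textbf{for} loop (\cref{line:for-k}), the current set $V_t$ equals the $k$-core of $G$. Granted this, the theorem is immediate: the algorithm overwrites $\hat{k}(v) \leftarrow k$ precisely for those $v$ that survive iteration $k$, so the final value of $\hat{k}(v)$ is the largest $k$ for which $v$ lies in the $k$-core, which is by definition the core number of $v$ (vertices of core number $0$ are covered by the initialization $\hat{k}(v) = 0$, since they are removed in iteration $k=1$).

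The base case $k = 0$ holds because $V_0 = V$ is trivially the $0$-core. For the inductive step, suppose the set $V_{t^\ast}$ at the start of iteration $k$ equals the $(k-1)$-core. I would show two containments between the $k$-core $H$ and $V_t$ during iteration $k$. First, $H \subseteq V_t$ at all times: any induced subgraph of minimum degree $\geq k$ also has minimum degree $\geq k-1$, so maximality of the $(k-1)$-core gives $H \subseteq V_{t^\ast}$; and whenever $V_t \supseteq H$, every $v \in H$ satisfies $d_{V_t}(v) \geq d_H(v) \geq k$, so the guard on \cref{line:classical if condition} never fires for $v$ and $v$ is never removed on \cref{line:classical peeled}. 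An induction on the peeling steps then yields $H \subseteq V_t$ throughout iteration $k$.

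For the reverse containment, once the \textbf{repeat} loop exits, the termination condition $V_{t-1} - V_t = \emptyset$ says that no vertex still in $V_t$ has induced degree less than $k$, so $V_t$ is an induced subgraph of $G$ of minimum degree $\geq k$; maximality of $H$ then gives $V_t \subseteq H$, closing the induction. I do not expect any substantive obstacle in executing this plan; the only care needed is to keep the $k$-core fixed (as a subgraph of the original $G$) across the nested loops and to verify that the $(k-1)$-core carried over from the previous outer iteration already contains the $k$-core being built in the current one, which is exactly the monotonicity of core containment used in the argument above.
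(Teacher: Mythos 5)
Your proposal is correct and follows essentially the same route as the paper: induction on $k$ showing the surviving set equals the $k$-core, with one containment from the fact that vertices of the true $k$-core (already inside the $(k-1)$-core) never trigger the removal condition, and the other from the termination condition forcing minimum induced degree $\geq k$ plus maximality of the $k$-core. Your write-up is, if anything, slightly more careful than the paper's (explicit inner induction over peeling steps and explicit handling of the final labelling), but there is no substantive difference.
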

\begin{proof}
    We will inductively show that the algorithm recovers the $k$-core of the graph. The base case of $k=0$ is easy. Now, assume that the algorithm finds the true $(k-1)$-core. Let $V(k)$ denote the subset of nodes which aren't removed in the iterative process for $k - 1$ in Line~\ref{line:for-k}. We have that each node $v\in V(k)$ has induced degree at least $k$ in $V(k)$, or else it would have been removed. Thus, we know that the core numbers $k(v)$ is at least $k$ for each $v\in V(k)$, so we have that $V(k)$ is a subset of the true $k$-core. Now, let $K$ denote the true $k$-core. Since the $k$-core is always a subset of the $(k-1)$-core by definition, each node $v\in K$ is in $V_t$ at the beginning of the iterations. Furthermore, we know that $v\in K$ is never removed from $V_t$ since the induced degree is always at least $k$ since $K\subseteq V_t$. Thus, we have that $V(k)$ is a superset of the true $k$-core as well. Thus, $V(k)$ is the true $k$-core for each $k$, so all nodes are labelled correctly.  
\end{proof}

\subsection{Private Implementation of the Algorithm}
\label{sec:optimal-private}

It is difficult to turn the classical algorithm into a differentially private one because it has $\Omega(n)$ iterations, which causes us to incur $\tilde{\Omega}(n)$ additive error when using basic composition (\Cref{lem:basic-comp}). In fact, this was cited  in~\cite{DLRSSY22} as the motivation for basing their algorithms on parallel/distributed algorithms for $k$-core decomposition, since those algorithms often have $\text{poly}\log({n})$ round-complexity. Our main observation is that the private version of \Cref{alg:classical-algorithm}     can be analyzed as a special case of the Multidimensional AboveThreshold mechanism, so it doesn't need to incur the $\Omega(n)$ additive error due to composition. 

\begin{algorithm}[h]
\caption{$\eps$-Differentially Private $k$-Core Decomposition}
\label{alg:optimal-algorithm}
\textbf{Input:} Graph $G=(V,E)$, privacy parameter $\eps>0$.\\
\textbf{Output:} An $(1,120\log(n)/\eps)$-approximate $k$-core value of each node $v\in V$\\
\begin{algorithmic}[1]
\STATE $V_0\leftarrow V$, $t\leftarrow 0$, $k=60\log{n}/\eps$.
\STATE Initialize $\hat{k}(v)\leftarrow 0$ for all $v\in V$ \label{line: initial labelling}
\FOR{$v\in V$}
\STATE $\tilde{\ell}(v)\leftarrow \text{Lap}(4/\eps)$
\ENDFOR{}
\STATE
\WHILE{$k\le n$}\label{line:optimal k-iteration}
\REPEAT
\STATE $t\leftarrow t+1$, $V_t\leftarrow V_{t-1}$
\FOR{$v\in V_{t-1}$}
\STATE $\nu_{t,v}\leftarrow\text{Lap}(8/\eps)$
\IF{$d_{V_{t-1}}(v)+\nu_{t,v}\le k+\tilde{\ell}(v)$}\label{line:optimal if condition}
\STATE $V_t\leftarrow V_t-\{v\}$\label{line:optimal peeled}
\ENDIF{}
\ENDFOR{}
\UNTIL{$V_{t-1}-V_{t} = \emptyset$}
\STATE
\STATE Update the core numbers $\hat{k}(v)\leftarrow k$ for all nodes $v\in V_t$
\STATE $k\leftarrow k+60\log{n}/\eps$
\ENDWHILE{}
\end{algorithmic}
\end{algorithm}

\begin{theorem}
    \Cref{alg:optimal-algorithm} is $\eps$-edge differentially private.\label{thm:optimal-privacy}
\end{theorem}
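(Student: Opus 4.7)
The plan is to prove privacy by exhibiting Algorithm~\ref{alg:optimal-algorithm} as a direct instantiation of the Multidimensional AboveThreshold mechanism (\Cref{alg:multidimensional Above Threshold}) and then invoking \Cref{thm:privacy AboveThreshold} together with post-processing (\Cref{lem:post}). The conceptual point is that every privacy-relevant decision in the algorithm is an above-threshold check on a per-vertex coordinate, the per-vertex threshold noise $\tilde\ell(v)$ is sampled once and reused throughout, and once a vertex is peeled it never appears in another check, exactly matching MAT's rule of no longer querying a coordinate that has answered $\top$.

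Concretely, I would index sub-iterations $t$ across all values of $k$ and define the adaptive vector query at sub-iteration $t$ by $f_{t,v}(G) = k(t) - d_{V_{t-1}}(v)$ for each $v \in V$, where $k(t)$ is the current outer-loop value and $V_{t-1}$ is the surviving vertex set (a deterministic function of prior MAT outputs). I would take the MAT threshold vector $\vec T$ to be identically $0$ so that the noisy threshold for $v$ equals $\tilde\ell(v) \sim \text{Lap}(4/\eps)$. By symmetry of the Laplace distribution, the condition $d_{V_{t-1}}(v)+\nu_{t,v}\le k+\tilde\ell(v)$ on Line~\ref{line:optimal if condition} is distributionally identical to $f_{t,v}(G)+\nu'_{t,v}\ge \hat T_v$ with $\nu'_{t,v} := -\nu_{t,v}$, i.e., the MAT firing condition.

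Next I would verify the sensitivity and noise scales. Conditioning on the prior transcript fixes $V_{t-1}$, and inserting or removing a single edge $(u,w)$ in $G$ alters only $d_{V_{t-1}}(u)$ and $d_{V_{t-1}}(w)$, each by at most $1$, so $\|\vec f_t(G) - \vec f_t(G')\|_1 \le 2$; hence the MAT sensitivity parameter is $\Delta=2$. This yields threshold noise $\text{Lap}(2\Delta/\eps)=\text{Lap}(4/\eps)$ and per-query noise $\text{Lap}(4\Delta/\eps)=\text{Lap}(8/\eps)$, matching the noise distributions used in Algorithm~\ref{alg:optimal-algorithm} exactly. Applying \Cref{thm:privacy AboveThreshold} then shows that the full transcript of peel/no-peel responses is $\eps$-edge differentially private, and since each output core number $\hat k(v)$ is a deterministic function of that transcript (the value of $k$ at the sub-iteration when $v$ first fired), \Cref{lem:post} concludes the proof.

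The main subtleties to check are (i) the adaptivity of the queries: $f_{t,v}$ depends on $V_{t-1}$, which depends on prior outputs of MAT but not otherwise on $G$ -- this is exactly the form of adaptivity that MAT's proof accommodates, since the proof couples the noise on all $i<r(j)$ and therefore the two executions on neighbors $G\sim G'$ produce an identical history (and hence identical $V_{t-1}$) up to the firing round; (ii) the sign flip from "peel when degree is low" to MAT's "above threshold," which costs nothing by symmetry of Laplace; and (iii) ensuring that $\tilde\ell(v)$ is genuinely used only while $v$ has not fired -- this is immediate because once $v$ is peeled it is removed from $V_t$ and from every subsequent query. I expect (i) to be the only step that requires careful phrasing; the remaining verifications are routine bookkeeping.
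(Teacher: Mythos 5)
Your proposal is correct and matches the paper's own proof essentially step for step: both reduce Algorithm~\ref{alg:optimal-algorithm} to an instance of \Cref{alg:multidimensional Above Threshold} with threshold vector $\vec 0$, adaptive queries $k-d_{V_{t-1}}(v)$ of sensitivity $\Delta=2$ (hence noise scales $\text{Lap}(4/\eps)$ and $\text{Lap}(8/\eps)$), and then recover the core numbers by post-processing via \Cref{thm:privacy AboveThreshold} and \Cref{lem:post}. Your explicit treatment of the sign flip via Laplace symmetry is a detail the paper relegates to a footnote about coupling the random variables, but the argument is the same.
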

\begin{proof}
    We will show that \Cref{alg:optimal-algorithm} is an instance of the multidimensional AboveThreshold mechanism, implying that it is $\eps$-edge differentially private. Specifically, we will show that its output can be obtained by post-processing the output of an instance of \Cref{alg:multidimensional Above Threshold}. Indeed, consider the instance of the multidimensional AboveThreshold mechanism where the input graph is $G$, the privacy parameter is $\eps$, and the threshold vector $\vec{T}=\vec{0}$ is the zero vector. We will now inductively (and adaptively) define the queries. 
    
    For each iteration $t$, the $t^{th}$ query consists of the vector of $k-d_{V_{t-1}}(v)$ for each $v\in V$, where $V_0=V$ as in the algorithm; note that this matches with the queries on Line \ref{line:optimal if condition} of the algorithm. First, observe that the sensitivity of the vector containing $k-d_{V_{t-1}}(v)$ for each $v\in V$ is $\Delta=2$ for each iteration $t$, as needed in the algorithm. Next, observe that we can construct $V_t$ from $V_{t-1}$ using only the outputs $a_t(v)$ of the queries at the current iteration: given $v\in V_{t-1}$, we include $v$ in $V_t$ as well if and only if $a_t(v)=\bot$, which is equivalent to the condition in Line \ref{line:optimal if condition} of $d_{V_{t-1}}(v)+\nu_{t,v}\le k+\tilde{\ell}(v)$\footnote{Here, we implicitly assume that the randomness used by the algorithm and the AboveThreshold mechanism are the same. This is justified by a coupling of the random variables in the two algorithms. Specifically, we couple the noise added in Line 4 of the algorithm with the noise added in Line 2 of the AboveThreshold mechanism and couple the noise $\nu_{t,v}$ with $\nu_{i,j}$ in the AboveThreshold mechanism. It is easy to see that for $\Delta=2$, the random variables are exactly the same so the privacy guarantees of multidimensional AboveThreshold translates to privacy guarantees for our algorithm.}. Thus, this is a feasible sequence of adaptive queries for the AboveThreshold mechanism.
    
    Since the sequence of subsets $\{V_t\}$ are obtained by post-processing the outputs $a_t(v)$ of the AboveThreshold mechanism, they are $\eps$-differentially private by \Cref{thm:privacy AboveThreshold} and \Cref{lem:post}. By applying post-processing again, the sequence of pairs $(V_t,k_t)$ for each iteration $t$ is also $\eps$-differentially private since the sequence $k_t$ is public. Given the pairs $(V_t,k_t)$ for each iteration $t$, we can now recover the approximate core numbers which the algorithm outputs by setting the core numbers as $\hat{k}(v)=k_t$ for each node in $V_t-V_{t-1}$. It is easily verified that this gives the exact same output as \Cref{alg:optimal-algorithm}, so we have $\eps$-differential privacy for the algorithm by applying post-processing (\Cref{lem:post}) once again.
\end{proof}

\begin{theorem}
    \Cref{alg:optimal-algorithm} outputs $(1,\frac{120\log{n}}{\eps})$-approximate core numbers $\hat{k}(v)$ with probability $1-O(\frac{1}{n^2})$.\label{thm:optimal-utility}
\end{theorem}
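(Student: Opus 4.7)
The plan is to condition on a high-probability event $E$ that all Laplace noise terms are bounded by $O(\log n / \eps)$. Specifically, I would define $E$ as the event that $|\tilde{\ell}(v)| \leq 20\log n/\eps$ for every $v \in V$ and $|\nu_{t,v}| \leq 40 \log n/\eps$ for every inner iteration $t$ and every $v \in V_{t-1}$. The tail bound $\Pr[|\mathrm{Lap}(b)| > t] = \exp(-t/b)$ makes each individual deviation fail with probability at most $n^{-5}$. The total number of $\nu_{t,v}$ draws is $O(n^2)$, because each non-terminating inner iteration peels at least one vertex, bounding the total number of inner iterations across all outer $k$-phases by $O(n)$. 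A union bound then yields $\Pr[E] \geq 1 - O(1/n^2)$, which matches the failure probability stated in the theorem.

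Conditional on $E$, the second step is to prove a deterministic structural sandwich on the vertex sets produced by the algorithm. Write $V^{(k)}$ for the set of vertices remaining at the end of the outer iteration with threshold $k$, let $C_j$ denote the $j$-core of $G$, and set $L = 60\log n/\eps$. Since $|\tilde{\ell}(v)| + |\nu_{t,v}| \leq L$ under $E$, the peeling rule $d + \nu_{t,v} \leq k + \tilde{\ell}(v)$ must fire whenever $d_{V_{t-1}}(v) \leq k - L$ and must fail whenever $d_{V_{t-1}}(v) > k + L$. By induction on the outer iterations, with an inner induction on $t$ within each phase, I would establish
\[
    C_{k+L+1} \;\subseteq\; V^{(k)} \;\subseteq\; C_{k-L}.
\]
The left containment follows because any $v \in C_{k+L+1}$ has induced degree at least $k+L+1$ inside every superset of $C_{k+L+1}$ and is therefore never peeled, while the outer induction carries this across successive phases using $C_{k'+L+1} \supseteq C_{k+L+1}$ for $k' < k$. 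The right containment follows from the termination condition of the outer loop: no vertex is peeled in the last inner iteration, so every remaining $v$ must satisfy $d_{V^{(k)}}(v) > k - L$, forcing $V^{(k)}$ to induce a subgraph of minimum degree at least $k - L + 1$ and hence to lie inside $C_{k-L}$.

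The final step reads off the utility bound from these inclusions together with the monotonicity of $V^{(k)}$ in $k$. Since $\hat{k}(v)$ is exactly the largest value of $k$ in the progression $\{60 j \log n/\eps : j \geq 1\}$ for which $v \in V^{(k)}$ (and $0$ otherwise), the right containment gives $\hat{k}(v) \leq k(v) + L = k(v) + 60\log n/\eps$. For the lower bound, whenever $k(v) \geq 2L + 1$, every progression value $k \leq k(v) - L - 1$ satisfies $v \in C_{k+L+1} \subseteq V^{(k)}$; as the progression has spacing $L$, the largest such $k$ is at least $k(v) - 2L = k(v) - 120\log n/\eps$, and for smaller $k(v)$ the bound $\hat{k}(v) \geq 0 \geq k(v) - 120\log n/\eps$ holds trivially. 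Combining the two bounds yields $|\hat{k}(v) - k(v)| \leq 120\log n/\eps$.

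The main subtlety, rather than a genuine obstacle, is the coordination between the two inductions: outer phases do not restart from $V$ but continue from the previous $V^{(k')}$, so one must check that the invariant $C_{k+L+1} \subseteq V^{(k)}$ is preserved across phase boundaries, and that the $60\log n/\eps$ granularity of the $k$-progression exactly matches the $2L$ slack coming from the noise window. Once these invariants are set up cleanly, the utility bound drops out mechanically, and no additional composition argument is needed since privacy has already been established via the reduction to the multidimensional AboveThreshold mechanism in Theorem \ref{thm:optimal-privacy}.
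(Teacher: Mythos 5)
Your proposal is correct and follows essentially the same route as the paper's proof: condition on the event that all Laplace draws are bounded by $O(\log n/\eps)$ via a union bound, establish the two-sided sandwich (your $C_{k+L+1}\subseteq V^{(k)}\subseteq C_{k-L}$ is exactly the paper's claims that survivors have core number at least $k-60\log n/\eps$ and peeled vertices have core number at most $k+60\log n/\eps$), and read off the $120\log n/\eps$ bound from the spacing of the $k$-progression. Your accounting of the number of noise draws ($O(n^2)$ rather than the paper's looser $O(n^3)$) and your explicit handling of the cross-phase induction are slightly more careful than the paper's write-up, but the argument is the same.
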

\begin{proof}
By the density function of the Laplace distribution, we know that we have $|\nu_{t,u}|\le \frac{40\log{n}}{\eps}$ and $|\tilde{\ell}(u)|\le \frac{20\log{n}}{\eps}$ each with probability at least $1-\frac{1}{n^5}$. Since there are at most $O(n^3)$ such random variables, taking a union bound over all nodes $u\in V$ and all iterations of the loops, we have the above guarantee with probability at least $1-{O}\left(\frac{1}{n^2}\right)$. We condition on the event that the above inequalities hold true for each $t\in[T]$ and each $u\in V$ for the remainder of the proof.

Fix an arbitrary iteration $k$ of the while loop starting on Line \ref{line:optimal k-iteration}. 
Let $H$ be the set of nodes remaining in $V_t$ at the end of the while loop in Lines 8---16. We claim that ($i$) all nodes $u\in H$ have core number at least $k-\frac{60\log{n}}{\eps}$ and ($ii$) all nodes $u\not\in H$ have core number at most $k+\frac{60\log{n}}{\eps}$. To see ($i$), consider the subgraph $H$; we claim that each node $u\in H$ has induced degree at least $k-\frac{60\log{n}}{\eps}$ in $H$. Indeed, since each node in $H$ was not removed in the final iteration of the while loop in Lines 8---16, we have that $d_{H_t}(u)+\nu_{t,u}\ge k+\tilde{\ell}(u)$ for each $u\in H$. But since we have assumed $|\nu_{t,u}|\le\frac{40\log{n}}{\eps}$ and $|\tilde{\ell}(u)|\le\frac{20\log{n}}{\eps}$, our desired bounds follow directly by the triangle inequality. To see ($ii$), let's suppose for contradiction that the $k$-core value of $u$ is $k(u)>\ell+\frac{60\log{n}}{\eps^\prime}$. Then there exists a subgraph $u\in K\subseteq V$ where the induced degree of each node $v\in K$ is $d_K(v)\ge k(u)$. But for such a subgraph $K$, the condition in Line 12 will always be true (again, because $|\nu_{t,u}|\le\frac{40\log{n}}{\eps}$ and $|\tilde{\ell}(u)|\le\frac{20\log{n}}{\eps}$) so $K\subseteq H$. But since $u\in K$, this contradicts the fact that $u\not\in H$.

Using the above bounds, we can now prove our desired results. First, consider an arbitrary node $u\in V$ labeled $\hat{k}(u)$ within the while loop in Lines \ref{line:optimal k-iteration}---20 and not relabeled later. Since $u$ was labeled $\hat{k}(u)$ in the current iteration, we have that $k(u)\ge \hat{k}(u)-\frac{60\log{n}}{\eps^\prime}$. Since $u$ was not relabelled in the next iteration where the threshold was $k=\hat{k}(u)+\frac{60\log{n}}{\eps}$, the node $u$ was removed from $V_t$ at that iteration. Consequently, we have $k(u)\le k+\frac{60\log{n}}{\eps}=\hat{k}(u)+\frac{120\log{n}}{\eps}$ by what we just proved above. Since the output of our algorithm is $\hat{k}(u)$, the desired bounds in the theorem statement follow directly. Now, let's consider an arbitrary node $u\in V$ labeled $0$ at Line \ref{line: initial labelling} at the beginning of the algorithm and not relabeled later. Since the node $u$ was not relabelled at the first iteration where $k=\frac{60\log{n}}{\eps}$, it was removed from $V_t$ at that iteration, implying that $k(u)\le \frac{120\log{n}}{\eps}$. Hence, we have the desired approximation guarantees for all nodes.
\end{proof}

The above proof can be modified with appropriate constants to give the approximation with probability 
at least $1 - \frac{1}{n^c}$ for any $c \geq 1$, thus yielding our desired with high probability result.

\subsection{Efficient Implementation of the Algorithm}
\label{sec:optimal-efficient}

Though the current algorithm has strong utility guarantees, the running time is at least quadratic in $n$. We will make a slight modification to the algorithm, which will incur an additional multiplicative error of $1+\eta$ for an user chosen constant $\eta>0$, and show that a clever sampling trick can be used to implement the resulting algorithm in time nearly linear in the number of edges $m$. Via a similar proof as before, we can obtain an analogous set of privacy and utility guarantees. 

\begin{theorem}\label{thm:efficient-1}
    \Cref{alg:efficient-algorithm} has the following guarantees:
    \begin{itemize}
        \item It is $\eps$-edge differentially private.
        \item It outputs $(1+\eta,O(\log{n}/\eps))$-approximate core numbers with probability $1-O(\frac{1}{n^2})$.
    \end{itemize} 
\end{theorem}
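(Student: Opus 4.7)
The plan is to mirror the structure of the proofs of \Cref{thm:optimal-privacy} and \Cref{thm:optimal-utility}, changing only what is necessary to accommodate the geometric sequence of thresholds and the sampling-based implementation. The efficient algorithm presumably replaces the arithmetic progression $k = c\log n/\eps, 2c\log n/\eps, \dots$ with a geometric one $k_i = (1+\eta)^i \cdot c\log n/\eps$, which collapses the number of outer threshold values from $\Theta(n\eps/\log n)$ down to $O(\log_{1+\eta} n) = O(\log n/\eta)$; in addition, subsampling or lazy bookkeeping is used so that each peeling sweep touches work proportional to the edges actually removed, yielding near-linear total running time.

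For the privacy claim, I would reduce the algorithm to an instance of the Multidimensional AboveThreshold mechanism (\Cref{alg:multidimensional Above Threshold}) in exactly the way used in the proof of \Cref{thm:optimal-privacy}. The threshold vector is again $\vec{T} = \vec{0}$, with coordinates indexed by nodes $v\in V$, and the adaptive query at time $t$ in the $i$th outer phase returns the sensitivity-$2$ vector $(k_i - d_{V_{t-1}}(v))_{v\in V}$. Couple the Laplace noises used in the algorithm with the corresponding noises inside MAT; then the sequence of vertex sets $\{V_t\}$ is a deterministic post-processing of the MAT outputs, and the final core-number assignment is a further post-processing of the $(V_t, k_i)$ pairs. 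Any sampling done for efficiency acts only on the comparisons' outputs and hence does not touch the raw graph data. \Cref{thm:privacy AboveThreshold} together with \Cref{lem:post} then yields $\eps$-edge differential privacy.

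For the utility claim, condition on the high-probability event that every Laplace noise used in the algorithm has magnitude at most $O(\log n/\eps)$; since there are at most $\poly(n)$ such noises, a union bound gives failure probability $O(1/n^2)$. The two structural facts from the proof of \Cref{thm:optimal-utility} still go through for each $k_i$: every node surviving the inner loop at threshold $k_i$ has core number at least $k_i - O(\log n/\eps)$, and every node whose core number exceeds $k_i + O(\log n/\eps)$ must survive that phase. Hence if a node $u$ is labelled $\hat{k}(u) = k_i$, then
\[
k(u) \geq k_i - O(\log n/\eps) = \hat{k}(u) - O(\log n/\eps),
\]
while $u$'s removal during phase $i+1$ forces
\[
k(u) \leq k_{i+1} + O(\log n/\eps) = (1+\eta)\hat{k}(u) + O(\log n/\eps).
\]
Nodes that are never relabelled above $0$ are handled by the same argument applied to the initial threshold $k_0 = O(\log n/\eps)$, which bounds their true core number by $O(\log n/\eps)$.

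The main obstacle I anticipate is verifying that the sampling-based implementation computes precisely the same noisy peeling order as the naive inefficient algorithm (or at least one whose distribution can still be coupled with an MAT instance). The issue is that a single outer phase can still require many sweeps of the inner loop, so the implementation must avoid re-examining all surviving vertices each sweep, presumably by maintaining degrees incrementally and only re-testing neighbors of recently removed nodes, perhaps together with a degree-bucketing scheme. Showing that this data-structural acceleration preserves the identity of the noisy comparisons executed at each step, so that both the MAT reduction and the two structural utility facts apply without modification, is the most delicate part; once that invariant is established, the privacy and approximation bounds above follow immediately.
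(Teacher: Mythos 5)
Your proof matches the paper's approach: the paper gives no written proof of this theorem, stating only that it follows ``via a similar proof as before,'' and your MAT reduction (with sensitivity-$2$ queries $(k_i - d_{V_{t-1}}(v))_{v}$) together with the geometric-threshold utility argument yielding $k(u)\le (1+\eta)\hat{k}(u)+O(\log n/\eps)$ is precisely the intended filling-in of that gap. Your anticipated obstacle about the sampling-based implementation is out of scope here --- the theorem concerns only the plain geometric-threshold variant of the peeling algorithm, and the exact distributional equivalence of the fast implementation is established separately in \Cref{lem:run time} via the memoryless property of the geometric distribution.
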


\begin{algorithm}[h]
\caption{}
\label{alg:efficient-algorithm}
\textbf{Input:} Graph $G=(V,E)$, multiplicative error $\eta>0$, privacy parameter $\eps>0$.\\
\textbf{Output:} An $(1+\eta,60\log(n)/\eps)$-approximate $k$-core value of each node $v\in V$\\
\begin{algorithmic}[1]
\STATE $V_0\leftarrow V$, $t\leftarrow 0$, $k=60\log{n}/\eps$.
\STATE Initialize $\hat{k}(v)\leftarrow 0$ for all $v\in V$ \label{line: efficient initial labelling}
\FOR{$v\in V$}
\STATE $\tilde{\ell}(v)\leftarrow \text{Lap}(4/\eps)$
\ENDFOR{}
\STATE
\WHILE{$k\le n$}\label{line:efficient k-iteration}
\REPEAT
\STATE $t\leftarrow t+1$, $V_t\leftarrow V_{t-1}$
\FOR{$v\in V_{t-1}$}
\STATE $\nu_{t,v}\leftarrow\text{Lap}(8/\eps)$
\IF{$d_{V_{t-1}}(v)+\nu_{t,v}\le k+\tilde{\ell}(v)$}\label{line:efficient if condition}
\STATE $V_t\leftarrow V_t-\{v\}$\label{line:efficient peeled}
\ENDIF{}
\ENDFOR{}
\UNTIL{$V_{t-1}-V_{t} = \emptyset$}
\STATE
\STATE Update the core numbers $\hat{k}(v)\leftarrow k$ for all nodes $v\in V_t$
\STATE $k\leftarrow (1+\eta)\cdot k$
\ENDWHILE{}
\end{algorithmic}
\end{algorithm}

Now, we discuss the efficient sampling procedure. We claim that the naive implementation of the algorithm given in \cref{alg:optimal-algorithm} requires $\Omega(n^2)$ runtime in the worst case. Let us focus on the do-while loop in Lines 8--16 of \cref{alg:optimal-algorithm}. Clearly, the for-loop in Lines 10--15 iterates $\Theta(n)$ times for each iteration of the do-while loop. We claim that in the worst case, the do-while loop also requires $\Omega(n)$ iterations. Let us illustrate in the easy case where there is no noise (or equivalently, when we take $\eps$ to infinity). Consider the path graph, and observe that the degree of each node is $2$ except the endpoints which have degree $1$. If the threshold $\ell$ is 2, then only the endpoints are removed and the resulting graph will again be a path graph, now of size $n-2$. Clearly, it will take $n/2$ iterations before the do-while loop terminates. This example can be generalized to the noisy version by increasing the separation of the degrees from $1$ up to $4\log{n}$.

We will give a more efficient way to implement the sampling so that the algorithm can run in $\tilde{O}(m+n)$ time. The idea is as follows: given a vertex $v$ and a fixed threshold $k+\tilde{\ell}$, we can calculate the probability $p$ that the vertex is removed at a given iteration of the do-while loop. In fact, this probability remains the same for vertex $v$ unless a neighbor of $v$ is removed from $H_t$. Thus, whenever a neighbor of $v$ is removed, we can sample a (geometric) random variable for when the vertex $v$ is peeled assuming no more neighbors of $v$ are removed. Whenever another neighbor of $v$ is removed, we have to resample this geometric random variable, but the number of times this resampling must be done over all vertices $v$ scales linearly with the number of edges $m$ rather than quadratically with the number of nodes $n$. This is detailed below. In~\cref{alg:algorithm-efficient}, $\text{Geom}(q)$ is the \emph{geometric} distribution with parameter $q$.

\begin{algorithm}[h]
\caption{An efficient implementation of Lines 8--16 in \cref{alg:optimal-algorithm}}
\label{alg:algorithm-efficient}
\hspace*{\algorithmicindent} \textbf{Input:} Graph $G=(V,E)$, parameter $\eps>0$, threshold $k+\tilde{\ell}(v)$ for each $v\in V$.\\
\hspace*{\algorithmicindent} \textbf{Output:} $V_t$ with same distribution as Lines 8--16 of \cref{alg:optimal-algorithm}\\
\begin{algorithmic}[1]
\STATE $t\leftarrow0$, $V_0\leftarrow V$
\STATE $\text{to-remove}(t)\leftarrow\emptyset$ for each $t\in [n]\hfill$\textcolor{Green}{//set of nodes to remove from $V_{t-1}$ at time $t$}
\STATE $\text{remove-time}(v)\leftarrow n$ for each $v\in V$
\STATE $\text{updated}\leftarrow V$
\REPEAT
\FOR{$v\in\text{updated}$}
\STATE remove node $v$ from to-remove(remove-time$(v))$\hfill\textcolor{Green}{//updates the sampled time for removing $v$}
\STATE $q\leftarrow \Pr[d_{V_t}(v)+\text{Lap}(8/\eps)\ge k+\tilde{\ell}(v)]$
\STATE $\text{remove-time}(v)\leftarrow t+\text{Geom}(q)$
\STATE add node $v$ to to-remove(remove-time$(v))$
\ENDFOR
\STATE
\STATE $t\leftarrow t+1$, $V_t\leftarrow V_{t-1}$\hfill\textcolor{Green}{//this is only notational; we only store the set $V_t$ for a single $t$}
\STATE
\FOR{$v\in\text{to-remove}(t)$}\label{line:to-remove}
\STATE remove $v$ from $V_t$
\STATE add all neighbors of $v$ in $V_t$ to \text{updated}\hfill\textcolor{Green}{//need to update remove-time if degree changes}
\ENDFOR{}
\UNTIL{$V_{t-1}-V_{t} = \emptyset$}
\end{algorithmic}
\end{algorithm}

\begin{lemma}
    \cref{alg:algorithm-efficient} satisfies the following:
    \begin{itemize}  
        \item It gives the same distribution of $V_t$ as Lines 8--16 in \cref{alg:optimal-algorithm}.
        \item It can be implemented in $\tilde{O}(n+m)$ time.  
    \end{itemize}\label{lem:run time}
\end{lemma}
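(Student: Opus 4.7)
The plan is to prove the two claims separately: distributional equivalence via a coupling argument based on memorylessness of the geometric distribution, and the runtime via an amortized analysis of the number of remove-time recomputations.

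For the distributional equivalence, condition on the fixed thresholds $\tilde{\ell}(\cdot)$. The key observation is that in Lines 8--16 of \Cref{alg:optimal-algorithm}, if a node $v$'s induced degree in $V_\tau$ stays fixed at some value $d$ over a window $\tau \in [t_1, t_2)$, then the first step at which $v$ fails its threshold check is geometrically distributed with parameter $p_d := \Pr[d + \mathrm{Lap}(8/\eps) \le k + \tilde{\ell}(v)]$, since the relevant Bernoulli trials are i.i.d.\ throughout the window and depend only on $d$ and $\tilde{\ell}(v)$. \Cref{alg:algorithm-efficient} directly samples this geometric waiting time at the beginning of the window and resamples it only when $v$'s degree changes because one of its neighbors is removed. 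By memorylessness of the geometric distribution, resampling at the moment of a degree change is distributionally equivalent to restarting i.i.d.\ Bernoulli trials with the new success probability. To formalize this, I would define an explicit coupling in which the Bernoulli trials in \Cref{alg:optimal-algorithm} are realized lazily through inverse-CDF sampling using the same uniforms that drive the geometric draws in \Cref{alg:algorithm-efficient}, and then induct on $t$ to show that $V_t$ agrees between the two algorithms.

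For the runtime, the key accounting is that for each node $v$, remove-time$(v)$ is recomputed (Lines 6--11 of \Cref{alg:algorithm-efficient}) at most $\deg(v) + 1$ times: once when $v$ is first placed into the updated set, and once per neighbor of $v$ that gets removed, since each neighbor is removed at most once. Summed over all $v$, this yields $O(n+m)$ total recomputations. Each recomputation uses one draw from a geometric distribution whose parameter is evaluated via the closed-form Laplace CDF, plus $O(1)$ work to update a doubly-linked representation of the to-remove buckets, so each recomputation costs $\tilde{O}(1)$. The work in Lines 15--18 of \Cref{alg:algorithm-efficient} of scanning the neighbors of removed nodes is also $O(n+m)$, since each edge is touched at most twice (once per endpoint). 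The termination condition $V_{t-1} - V_t = \emptyset$ ensures the REPEAT loop exits as soon as some step has no removals, so no work is wasted on empty iterations. Combining these contributions gives a total runtime of $\tilde{O}(n+m)$.

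The hardest part will be cleanly formalizing the coupling when multiple nodes are removed simultaneously at the same time step. In \Cref{alg:optimal-algorithm} all decisions at time $t$ use degrees in $V_{t-1}$ and are applied as a single batch; in \Cref{alg:algorithm-efficient} the corresponding updates to remove-time for the affected neighbors happen in the next iteration of the REPEAT loop and use the post-batch degrees. Arguing that these two views of the relevant degree at each time step coincide --- so that the geometric parameters used in the coupling match exactly --- is the step that requires the most care, but follows from the fact that both algorithms process ``apply removals, then decide next removals'' in the same order.
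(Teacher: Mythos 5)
Your proposal is correct and follows essentially the same route as the paper: induction on $t$ combined with memorylessness of the geometric distribution for the distributional equivalence, and charging each recomputation of remove-time to the removal of a neighbor (hence $O(n+m)$ total recomputations) for the runtime. The "hardest part" you flag is handled in the paper more simply than by an explicit shared-uniforms coupling: conditioned on $V_{t-1}$, the removal decisions of distinct nodes at step $t$ are independent in both algorithms, so it suffices to match the per-node marginal probabilities, which is exactly the memorylessness computation you describe.
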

\begin{proof}
To see that the distribution is the same, we use induction on $t$. We claim that at each iteration $t$, the distribution of $V_t$ is the same for the two implementations of the algorithm. The statement is clear for $t=0$ since $V_0=V$ in both implementations. Now, assume that the distribution of $V_{t-1}$ is the same for both implementations; we will show that the distribution of $V_{t}$ conditioned on $V_{t-1}$ is also the same for both implementations. This will imply our desired result by induction.

Given $V_{t-1}$, let us compute that the probability that a node $v\in V_{t-1}$ is not included in $V_t$ in both implementations. Note that the event that a node is not included in $V_t$ is independent (conditioned on $V_t$) so showing that the probabilities are the same is sufficient to show the equivalence of (conditional) distributions. In the implementation in \cref{alg:optimal-algorithm}, the probability is clearly just
\begin{align}
    \Pr[d_{V_t}(v)+\text{Lap}(8/\eps)\ge k+\tilde{\ell}(v)].\label{eq:prob-remove}
\end{align}
Now, let's consider the implementation in \cref{alg:algorithm-efficient}. To understand when the node $v$ will be removed, we look at the final time $t^\prime$ that node $v$ is added to the set \textit{updated}. The node $v$ will be removed at time $t$ if the \textit{remove-time(v)} sampled by the algorithm is exactly $t$. The probability that this occurs conditioned on the event that $v$ was not removed at $V_{t-1}$ can be calculated as:
$$\Pr[t^\prime+\text{Geom}(q)=t \mid t^\prime+\text{Geom}(q)\ge t-1]=\Pr[\text{Geom}(q)=1]=q,$$
where the first equality holds by the memoryless property of the geometric distribution. Since no neighbors of $v$ were removed between iterations $t^\prime$ and $t$, the induced degrees $d_{V_{t^\prime}}(v)$ and $d_{V_t}(v)$ are the same. It is thus evident that the probability $q$ is the same as the expression in \cref{eq:prob-remove}, proving the desired claim.

Now, we will prove that the implementation given in \cref{alg:algorithm-efficient} runs in near-linear time. First, we note that all sets are implemented as binary search trees so adding and removing elements requires only $O(\log{n})$ time. To analyze the time complexity, observe that the total run time of Lines 6--11 (in an amortized sense) scales with the total number of nodes added to the set \textit{updated} (up to logarithmic factors). Since a node is only added to the set \textit{updated} when one of its neighbors is removed, the total number of nodes added to \textit{updated} during the entire algorithm scales with the number of edges. The time complexity of Lines 16 is $\tilde{O}(n)$ since each node is only removed once and the time complexity of Line 17 is $\tilde{O}(m)$ for the same reason as before. That completes the proof.
\end{proof}

\section{An Improved Analysis of \cite{DLRSSY22}}\label{sec:improved-distributed}

We use our MAT technique above to improve the algorithm of~\cite{DLRSSY22} so that we obtain an $\eps$-LEDP algorithm for 
\kc decomposition with additive error $O\left(\frac{\log n}{\eps}\right)$ while maintaining the original
$O(\log^2 n)$ round complexity.  We reframe Algorithm 3 of~\cite{DLRSSY22} 
within our MAT framework in the pseudocode given in~\cref{alg:improved-ledp-kcore}. We highlight the changes we make to the 
original algorithm in~\cite{DLRSSY22} in blue. 
Using the improved analysis for the 
MAT framework, we can obtain our desired error bounds.

\begin{algorithm}[htb!]
\caption{Improved $\eps$-LEDP $k$-Core Decomposition in $O(\log^2 n)$ Rounds Based on~\cite{DLRSSY22}}
\label{alg:improved-ledp-kcore}
\textbf{Input:} Graph $G=(V,E)$, privacy parameter $\eps>0$.\\
\textbf{Output:} $(2,c'\log(n)/\eps)$-approximate core numbers of each node in $G$ for sufficiently large constant $c' > 0$\\
\begin{algorithmic}[1]
\STATE Set {\color{blue} $\psi = 0.1\coren$ and $\lambda = \frac{2(30- \eta)\eta}{(\eta + 10)^2}$}.\\
{\color{blue} \FOR{$i = 1$ to $n$}
    \STATE $\tilde{\ell}(i) \leftarrow  \text{Lap}(4/\eps)$.\label{lkcsr-line:noise-1}
\ENDFOR}
\FOR{$\lcur = 0$ to $\numlevels - 1$}\label{lkcsr-line:iterate}
    \STATE $L_{\lcur + 1}[i] \leftarrow L_{\lcur}[i]$.\label{lkcsr-line:old-level}\\
    \IF{$L_\lcur[i] = \lcur$}\label{lkcsr-line:level-cur-level}
        \STATE Let $\nup_{i}$ be the number of neighbors $j \in \adj_i$ where $L_\lcur[j] =
                \lcur$.\label{lkcsr-line:compute-up}\\
        \STATE Sample {\color{blue} $X \leftarrow  \text{Lap}(8/\eps)$}.\label{lkcsr-line:sample-noise}\\
        \STATE Compute $\hnup_{i} \leftarrow \nup_{i} + X$.\label{lkcsr-line:compute-noisy-up}\\
        \IF{{\color{blue}$\hnup_{i} \leq \upexp^{\lfloor r/(\numgrouplevels)\rfloor} + \tilde{\ell}(i)$}}\label{lkcsr-line:move-up-condition}
            \STATE {\color{blue} Stop and set $A_i \leftarrow 0$.}\label{lkcsr-line:group-update-1}
        \ELSE
            \STATE {\color{blue} Continue and set $A_i \leftarrow 1$.}\label{lkcsr-line:group-update}\\
        \ENDIF
    \ENDIF
    \STATE $i$ \release $A_i$.\label{lkcsr-line:release-ai}\\
    \STATE $L_{\lcur + 1}[i] \leftarrow L_{\lcur}[i] + 1$ for every $i$ where $A_i = 1$.\label{lkcsr-line:update-list} 
\ENDFOR
\STATE Curator publishes $L_{\lcur + 1}$.\label{lkcsr-line:levels}\\ 
\STATE $C \leftarrow \emptyset$.\label{lkcsr-line:c-empty}\\
\FOR{$i = 1$ to $n$}
    \STATE Let $\ell'$ be the highest level where $L_{\ell'}[i] = \ell'$ or $0$ if 
    level satisfies this condition.\label{newcoren:max-g}\\
    \STATE $\kest(i)\leftarrow\upexpold^{\max\left(\lfloor\frac{\ell' + 1}{4\lceil\log_{1+\psi}(n)\rceil}\rfloor-1, 0\right)}$.
    \STATE $C \leftarrow C \cup \{\kest(i)\}$.\label{newcoren:compute-estimate}
\ENDFOR
\STATE \Return $C$.\label{lkcsr-line:return-noisy-density}
\end{algorithmic}
\end{algorithm}

We first show that using our MAT technique given in~\cref{sec:mat} our algorithm is $\eps$-LEDP.

\begin{lemma}\label{lem:2-approx-ledp}
    \cref{alg:improved-ledp-kcore} is $\eps$-local edge differentially private. 
\end{lemma}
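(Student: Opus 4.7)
The plan is to cast \cref{alg:improved-ledp-kcore} as an instance of the local Multidimensional AboveThreshold mechanism (\cref{alg:local multidimensional Above Threshold}), so that applying \cref{thm:mat-ledp} together with post-processing (\cref{lem:post}) directly gives the $\eps$-LEDP guarantee. The high-level strategy mirrors the proof of \cref{thm:optimal-privacy}, which casts the peeling algorithm as a MAT instance; here the bookkeeping is somewhat more involved because the threshold $\upexp^{\lfloor r/(2\log n)\rfloor}$ varies with $r$ and because vertices participate only when $L_\lcur[i] = \lcur$ under a publicly maintained level array.

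Concretely, I would take the MAT threshold vector to be $\vec{T} = \vec{0}$ with one coordinate per vertex, and define the round-$r$ query $\vec{f}_r$ so that its $i$-th coordinate equals $\upexp^{\lfloor r/(2\log n)\rfloor} - \nup_i$ when $L_r[i] = r$ and equals $+\infty$ otherwise (so vertices that have already stopped trivially register $\top$). An edge insertion or deletion changes $\nup_u$ and $\nup_v$ by at most one each, so the $\ell_1$-sensitivity is $\Delta = 2$. This matches MAT's prescription of $\text{Lap}(2\Delta/\eps) = \text{Lap}(4/\eps)$ for the threshold noise and $\text{Lap}(4\Delta/\eps) = \text{Lap}(8/\eps)$ for the query noise, which are exactly the distributions used on Lines~\ref{lkcsr-line:noise-1} and \ref{lkcsr-line:sample-noise}. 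The stopping condition $\hnup_i \leq \upexp^{\lfloor r/(2\log n)\rfloor} + \tilde{\ell}(i)$ on Line~\ref{lkcsr-line:move-up-condition} rearranges to $(\upexp^{\lfloor r/(2\log n)\rfloor} - \nup_i) + (-X) \geq -\tilde{\ell}(i)$, and because the Laplace distribution is symmetric, $-X$ and $-\tilde{\ell}(i)$ have the same joint distribution as the MAT-generated noises. Thus ``$A_i = 0$'' corresponds to MAT's $\top$ output and ``$A_i = 1$'' to $\bot$; the round-dependent shift $\upexp^{\lfloor r/(2\log n)\rfloor}$ absorbs cleanly into $f_{r,i}$ because it is a deterministic function of the public round index and contributes nothing to sensitivity.

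Each query coordinate can be computed from vertex $i$'s adjacency list together with the previously released levels $L_\lcur[\cdot]$, so this MAT instance admits the local implementation of \cref{alg:local multidimensional Above Threshold} and inherits $\eps$-LEDP by \cref{thm:mat-ledp}. The remaining objects published by the algorithm---the final level array $L_{\lcur+1}$ on Line~\ref{lkcsr-line:levels} and the estimates $\kest(i)$ computed on Lines~\ref{newcoren:max-g}--\ref{newcoren:compute-estimate}---are deterministic post-processings of the MAT transcript, so \cref{lem:post} preserves privacy. The main subtlety I anticipate is verifying that MAT's ``stop answering after $\top$'' rule is compatible with the algorithm's level-freezing bookkeeping: once vertex $i$ first outputs $A_i = 0$ at some round $r^*$, Line~\ref{lkcsr-line:update-list} freezes $L_{r^*+1}[i] = r^*$, so in every subsequent round the condition $L_r[i] = r$ fails and the $i$-th coordinate of the query becomes $+\infty$ by construction, consistent with the coordinate being permanently stopped in MAT. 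Once this correspondence is formalized, \cref{thm:mat-ledp} closes the argument.
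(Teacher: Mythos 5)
Your proposal is correct and follows essentially the same route as the paper's proof: both cast \cref{alg:improved-ledp-kcore} as an instance of MAT with round-$r$ queries $\upexp^{\lfloor r/(\numgrouplevels)\rfloor} - \nup_i$, sensitivity $\Delta = 2$, and then recover $A_i$, the levels $L_\lcur[i]$, and the core estimates by post-processing via \cref{thm:mat-ledp} and \cref{lem:post}. Your additional bookkeeping (the sign-flip symmetry of the Laplace noise and the consistency of MAT's stop rule with the level-freezing) is a more explicit version of details the paper leaves implicit.
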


\begin{proof}
    We formulate~\cref{alg:improved-ledp-kcore} in the context of~\cref{alg:local multidimensional Above Threshold} which implies
    that~\cref{alg:improved-ledp-kcore} is $\eps$-LEDP. As in the proof of~\cref{thm:optimal-privacy}, we use an instance of the 
    MAT mechanism with input graph $G$, privacy parameter $\eps$, and threshold vector $\overrightarrow{T}$. We now adaptively define the 
    queries.

    For each round $r$, the $r$-th query consists of $\upexp^{\lfloor r/(\numgrouplevels)\rfloor} - U_i$. The sensitivity of the 
    vector containing such queries for each $v \in V$ is $2$ for each round $r$  (so $\Delta = 2$). Let $a_r(i)$ be the answers to the adaptive 
    queries from round $r$ for node $i$. Then, we can compute $A_i$ (Lines~\ref{lkcsr-line:group-update-1} and~\ref{lkcsr-line:group-update}) 
    (via post-processing) for each node $i$ using only $a_r(i)$ and 
    $L_{\lcur + 1}[i]$ is computed from $A_i$ and inductively from $L_{\lcur}[i]$. The approximate core numbers can be 
    obtained from the $L_{\lcur}[i]$ values from each round 
    via post-processing as shown in the original proof in~\cite{DLRSSY22}.
    Hence, our algorithm can be implemented using MAT and
    by~\cref{thm:mat-ledp} and~\Cref{lem:post} is $\eps$-LEDP.
\end{proof}

\begin{theorem}\label{thm:approx-logn-rounds}
    \cref{alg:improved-ledp-kcore} outputs $\left(2+\eta, O\left(\frac{\log n}{\eps}\right)\right)$-approximate core numbers $\hat{k}(v)$
    with probability at least $1 - O\left(\frac{1}{n^c}\right)$ for any constant $c \geq 1$. 
\end{theorem}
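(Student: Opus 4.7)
The plan is to reuse the utility analysis of Algorithm 3 from~\cite{DLRSSY22} essentially verbatim, but plug in the much smaller Laplace noise bounds that our MAT-based privacy argument permits. In~\cite{DLRSSY22}, composition over $O(\log^2 n)$ rounds forced geometric noise of magnitude $\poly(\log n)/\eps$; because privacy here is instead delivered by a single MAT instance (as shown in~\cref{lem:2-approx-ledp}), we can use Laplace noise of scale $O(1/\eps)$, and the additive error in the coreness estimates shrinks correspondingly.

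First, I would apply the standard Laplace tail bound and union-bound over the $n$ threshold noises $\tilde{\ell}(i)$ on Line~\ref{lkcsr-line:noise-1} and the $O(n\log^2 n)$ per-round perturbations $X$ on Line~\ref{lkcsr-line:sample-noise}: with probability at least $1 - O(1/n^c)$, every such random variable has magnitude at most $\zeta := c''\log n/\eps$ for a sufficiently large constant $c''$. I condition on this event for the remainder of the argument; then every comparison on Line~\ref{lkcsr-line:move-up-condition} behaves like a noise-free comparison against the threshold $\upexp^{\lfloor r/\numgrouplevels\rfloor}$ up to an additive slack of at most $2\zeta$.

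Next, I would invoke the structural lemmas of~\cite{DLRSSY22} with the improved noise bound $\zeta$ substituted for their $\poly(\log n)/\eps$ bound. These lemmas give two directions: (i) any node $i$ with final level $\ell'$ has up-degree at most $\upexp^{\lfloor \ell'/\numgrouplevels\rfloor} + 2\zeta$ in the subgraph induced on nodes of level $\geq \ell'$, which yields an upper bound on $k(i)$; and (ii) conversely, any node lying in the true $k$-core is driven up to a correspondingly high level, yielding a lower bound. Translating through the conversion $\kest(i) = \upexpold^{\lfloor (\ell'+1)/(4\lceil\log_{1+\lf} n\rceil)\rfloor - 1}$ preserves both bounds with multiplicative slack $(2+\lambda)(1+\lf)^{O(1)}$ and additive slack $O(\zeta) = O(\log n/\eps)$.

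Finally, the parameter choices $\lf = 0.1\eta$ and $\lambda = 2(30-\eta)\eta/(\eta+10)^2$ are calibrated (by exactly the algebra in~\cite{DLRSSY22}) so that $(2+\lambda)(1+\lf)^{O(1)} \leq 2+\eta$, yielding the claimed $(2+\eta, O(\log n/\eps))$-approximation with probability at least $1 - O(1/n^c)$. The main obstacle is making sure that the additive error truly remains $O(\log n/\eps)$ rather than scaling with the number of rounds or the per-node neighborhood size: I need to verify in the structural lemmas that the noise enters each per-node coreness estimate through only $O(1)$ comparisons (the stop-time threshold noise $\tilde{\ell}(i)$ and the query noise $X$ at that round), not accumulated across all $O(\log^2 n)$ rounds or $O(n)$ neighbors. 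This is precisely what the MAT framework buys us: it absorbs the across-round and across-node dependence into the privacy accounting in~\cref{thm:privacy AboveThreshold}, leaving the utility analysis to pay only for a single Laplace deviation per node.
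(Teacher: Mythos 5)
Your proposal matches the paper's proof: the paper likewise conditions on all Laplace noises being $O(\log n/\eps)$ in magnitude, re-derives the two degree invariants (your structural lemmas (i) and (ii)) with additive slack $2c\log n/\eps$ coming from only the one threshold noise $\tilde{\ell}(i)$ and the one query noise $X$ at the stopping round, and then invokes the remainder of the Theorem~4.7 analysis of~\cite{DLRSSY22} unchanged. Your closing observation — that MAT lets the utility analysis pay for a single Laplace deviation per node rather than accumulating error across rounds via composition — is exactly the point the paper is making.
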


We defer the proof of~\cref{thm:approx-logn-rounds} to~\cref{app:distributed-approx-proof} since it is nearly
identical to the proof of the original algorithm.

Using the MAT technique in~\cref{alg:multidimensional Above Threshold}, we can also remove a factor of $O(\log n)$ in the 
additive error using the $O(\log n)$ algorithm given in
Algorithm 3 of~\cite{DLRSSY22}; however, $\Delta = 2\log n$
for that algorithm when reframed in the MAT framework. 
Thus, we would obtain a $O(\log n)$ round algorithm
which is $\eps$-LEDP but gives $O\left(2 + \eta, 
O\left(\frac{\log^2 n}{\eps}\right)\right)$-approximate
core numbers.
\section{Related Problems}

We now discuss how our improved private $k$-core decomposition algorithm can be used to derive better private algorithms for densest subgraph and low out-degree ordering. We again note that all algorithms given here can be implemented in near-linear time using \Cref{lem:run time} while losing only $1+\eta$ in the multiplicative error; these details are omitted.

\subsection{Differentially Private Densest Subgraph}

We present our $(2, O(\log(n)/\eps))$-approximate edge-differentially private densest subgraph algorithm in~\cref{alg:densest subgraph algorithm}.
Our algorithm follows immediately from our $k$-core decomposition algorithm and the well-known folklore theorem that the density of the 
densest subgraph, $D(G)$, of graph $G$ falls between half of the maximum $k$-core value and the maximum $k$-core value. In fact, one can show
that the $(k_{\max}/2)$-core (where $k_{\max}$ is the maximum value of a non-empty core)
contains the densest subgraph in $G$ (see e.g.\ Corollary 2.2 of~\cite{SLDS23}). 
Furthermore, the $k_{\max}$-core gives a $2$-approximation of the densest subgraph.

We now use~\cref{alg:efficient-algorithm} to obtain differentially private approximations of the $k$-core values of each vertex.
Then, we find the maximum returned core value; we denote this value by $\hat{k}_{\max}$. Finally, we take the induced subgraph 
consisting of all vertices with approximate core values at least $\hat{k}_{\max} - \frac{c'\log(n)}{\eps}$ for sufficiently
large constant $c'$. This means that 
we include all $k_{\max}$ vertices, with high probability, and potentially some vertices with core values at least $k_{\max} - 
\frac{2c'\log(n)}{\eps}$. However, we prove that this still results in a good approximate densest subgraph in~\cref{thm:densest-subgraph}.

\begin{algorithm}[h]
\caption{Densest Subgraph}
\label{alg:densest subgraph algorithm}
\textbf{Input:} Graph $G=(V,E)$, privacy parameter $\eps>0$.\\
\textbf{Output:} An $(2,c'\log(n)/\eps)$-approximate densest subgraph $S\subseteq V$ for sufficiently large constant $c' > 0$\\
\begin{algorithmic}[1]
\STATE Run \Cref{alg:optimal-algorithm} with parameter $\eps$ to obtain approximate core numbers $\hat{k}(v)$ for each $v\in V$
\STATE Let $\hat{k}_{\max} \leftarrow \max_{v \in V}\left(\hat{k}(v)\right)$
\STATE Find $S \leftarrow \{v \mid \hat{k}(v) \geq \hat{k}_{\max} - c'\log(n)/\eps\}$
\STATE Return $S$
\end{algorithmic}
\end{algorithm}

\begin{theorem}\label{thm:densest-subgraph}
    \cref{alg:densest subgraph algorithm} outputs an $\eps$-(local) edge differentially private
    $\left(2, O\left(\frac{\log(n)}{\eps}\right)\right)$-approximate densest subgraph with 
    probability $1 - O\left(\frac{1}{n^c}\right)$ for any constant $c \geq 1$.
\end{theorem}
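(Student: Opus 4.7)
The plan is to split the argument into a privacy claim and a utility claim. Privacy is immediate by post-processing: $S$ depends on $G$ only through the approximate core numbers $\hat{k}(v)$ produced by \Cref{alg:optimal-algorithm}, which is $\eps$-edge differentially private by \Cref{thm:optimal-privacy}; the local variant follows because each coordinate of the underlying MAT queries is node-local, enabling the construction of \Cref{thm:mat-ledp}. \Cref{lem:post} then transfers the guarantee to \Cref{alg:densest subgraph algorithm}. For utility, I would first invoke the folklore bound $k_{\max} \geq D^*$: every vertex of the densest subgraph $H^*$ has induced degree at least $D^*$ in $H^*$ (otherwise removing it would strictly increase the density), so $H^*$ lies inside the $\lceil D^* \rceil$-core. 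Applying \Cref{thm:optimal-utility} (with constants boosted to a $1 - n^{-c}$ failure probability) then yields $|\hat{k}(v) - k(v)| \leq \zeta = O(\log n/\eps)$ for all $v$ simultaneously, and hence $\hat{k}_{\max} \geq k_{\max} - \zeta \geq D^* - \zeta$.

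The main step is to use the structure of the peeling loop to lower-bound the induced degree of each vertex of $S$. The key observation is that \Cref{alg:optimal-algorithm} maintains $V_t$ monotonically non-increasing as the outer threshold $k$ grows in increments of $60 \log n / \eps$, so $\{v : \hat{k}(v) \geq k^*\}$ coincides with the value of $V_t$ at the end of the peeling loop for $k = k^*$, where I take $k^*$ to be the largest multiple of $60 \log n / \eps$ not exceeding $\hat{k}_{\max} - c'\log n/\eps$. Thus $S$ is exactly the survivor set of a single outer iteration at threshold $k^*$, and the ``until $V_{t-1}-V_t = \emptyset$'' stopping condition guarantees that for every $v \in S$ the strict inequality $d_S(v) + \nu_{t,v} > k^* + \tilde{\ell}(v)$ holds. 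A Laplace tail bound and a union bound over the $\mathrm{poly}(n)$ noise variables (exactly as in the proof of \Cref{thm:optimal-utility}) confine $|\nu_{t,v}|$ and $|\tilde{\ell}(v)|$ to $O(\log n/\eps)$ with high probability, yielding $d_S(v) \geq k^* - O(\log n / \eps) \geq D^* - O(\log n/\eps)$ for every $v \in S$.

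The conclusion then follows by a one-line density calculation: $d(S) = \tfrac{1}{2|S|} \sum_{v \in S} d_S(v) \geq (D^* - O(\log n/\eps))/2 = D^*/2 - O(\log n/\eps)$, which is exactly the claimed $(2, O(\log n/\eps))$-approximation. The main obstacle, and the heart of the argument, is the identification of $S$ with the survivor set of a single outer iteration of \Cref{alg:optimal-algorithm}: it hinges on the easily-overlooked fact that $V_t$ is never reset across outer iterations, and it requires absorbing the discretization gap between $\hat{k}_{\max} - c'\log n/\eps$ and the nearest smaller multiple of $60 \log n/\eps$ into the additive error, which contributes only another additive $O(\log n/\eps)$ term.
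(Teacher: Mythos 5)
Your proposal is correct and follows essentially the same route as the paper: privacy by post-processing of \cref{alg:optimal-algorithm}, the folklore sandwich $k_{\max}/2 \le D^* \le k_{\max}$, and a density bound of (minimum induced degree in $S$)$/2$; your identification of $S$ with the survivor set of a single outer iteration, justified via the repeat-until termination condition, makes explicit the min-degree step that the paper's proof only asserts. The one nit is that $k^*$ should be the \emph{smallest} algorithm threshold that is at least $\hat{k}_{\max} - c'\log n/\eps$ rather than the largest multiple below it (otherwise $S$ is the survivor set at threshold $k^* + 60\log n/\eps$, not $k^*$), but this off-by-one only strengthens the degree bound and is absorbed into the $O(\log n/\eps)$ additive error exactly as you note.
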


\begin{proof}
    The privacy of~\cref{alg:densest subgraph algorithm} is guaranteed because we only do postprocessing on the differentially 
    private output of~\cref{alg:optimal-algorithm} and privacy is preserved under postprocessing (\Cref{lem:post}). Then, to prove the 
    additive error, we first observe that with probability at least $1 - O\left(\frac{1}{n^c}\right)$, the vertex set $S$
    contains vertices with core numbers at least $k_{\max} - \frac{c'\log(n)}{\eps}$ for sufficiently large constant $c' > 0$. 
    Then, we can calculate the 
    minimum density of the graph induced by $S$ to be at least

    \begin{align*}
        \frac{|S| \cdot (k_{\max} - 2c'\log(n)/\eps)}{2|S|} = \frac{|S| \cdot k_{max}}{2|S|} - \frac{|S|\cdot 2c'\log(n)}{2\eps|S|} \geq 
        \frac{k_{\max}}{2} - \frac{c'\log(n)}{\eps}.
    \end{align*}
    Since we know $k_{\max}/2 \leq D(G) \leq k_{\max}$, it follows that
    $S$ is a $(2, O(\log(n)/\eps))$-approximate densest subgraph.
\end{proof}

\subsection{Differentially Private Low Out-Degree Ordering}

\cite{DLRSSY22} defines the low out-degree ordering problem to be finding a differentially private ordering where
the out-degree of any vertex is minimized
when edges are oriented from vertices earlier in the ordering to later in the ordering. 
Specifically, there exists an ordering of the vertices of any input graph where
the out-degree is at most the degeneracy $d$ of the graph; the degeneracy of the graph 
is equal to its maximum $k$-core value. The previous best differentially private algorithm for 
the problem gives out-degree at most $(4+\eta)d + O\left(\frac{\log^3 n}{\eps}\right)$.
In this paper, we give a novel algorithm that obtains out-degree bounded by $d + O\left(\frac{\log n}{\eps}\right)$,
with high probability. 

Our algorithm (given in~\cref{alg:low-outdeg} in~\cref{app:low-outdegree-algorithm} due to space constraints) 
is a simple modification of~\cref{alg:optimal-algorithm} where each time we remove a 
vertex $v$ in Line~\ref{line:optimal peeled} of~\cref{alg:optimal-algorithm}, we add $v$ to the end of our current 
ordering. The changes to~\cref{alg:optimal-algorithm} are highlighted in blue.
Thus, our ordering is precisely the order by which vertices are removed. We now prove that 
our algorithm for low out-degree ordering is $\eps$-edge DP and gives an ordering for which the 
out-degree is upper bounded by $d + O\left(\frac{\log n}{\eps}\right)$.

\begin{theorem}
    Our modified~\cref{alg:optimal-algorithm} is $\eps$-(local) edge differentially private and gives a low out-degree ordering
    where the out-degree is upper bounded by $d + O\left(\frac{\log n}{\eps}\right)$.
\end{theorem}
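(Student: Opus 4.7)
The plan is to treat privacy and utility separately. For privacy, the ordering $J$ is a deterministic function of the sequence $\{V_t\}$ together with the internal order in which each vertex is removed within a single inner iteration; both are derived from the MAT outputs that already drive~\cref{alg:optimal-algorithm}. Since that algorithm is $\eps$-edge differentially private by~\cref{thm:optimal-privacy}, post-processing (\cref{lem:post}) immediately yields $\eps$-edge DP for the modified version. The local edge DP claim is obtained identically, invoking~\cref{thm:mat-ledp} in place of~\cref{thm:privacy AboveThreshold}.

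For the out-degree guarantee, I would first condition on the high-probability event $E$ that $|\nu_{t,v}| \leq 40\log n/\eps$ and $|\tilde{\ell}(v)| \leq 20\log n/\eps$ for every noise variable drawn; the same union bound as in~\cref{thm:optimal-utility} gives $\Pr[E] \geq 1 - O(1/n^c)$. The key claim is that under $E$, by the time the outer threshold $k$ first reaches $d + 60\log n/\eps$ (where $d$ is the degeneracy of $G$), every vertex has already been peeled. Suppose for contradiction that after the do-while loop stabilizes at such a $k$, a nonempty subgraph $H$ remains. Each surviving $v \in H$ must have failed the peeling test, i.e.\ $d_H(v) + \nu_{t,v} > k + \tilde{\ell}(v)$, and under $E$ this forces $d_H(v) > k - 60\log n/\eps \geq d$. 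Thus $H$ is a nonempty induced subgraph whose minimum degree strictly exceeds $d$, contradicting the definition of degeneracy.

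Because the threshold is incremented by $60\log n/\eps$ per outer iteration, any vertex $v$ is peeled at some $k \leq d + O(\log n/\eps)$. At the moment of removal, Line~\ref{line:optimal if condition} and $E$ together give $d_{V_{t-1}}(v) \leq k + 60\log n/\eps \leq d + O(\log n/\eps)$. Now orient each edge from the earlier endpoint in $J$ to the later one: the out-degree of $v$ is then exactly the number of its neighbors still present when $v$ is removed, namely $d_{V_{t-1}}(v)$, giving the claimed $d + O(\log n/\eps)$ bound. The hard step will be the degeneracy-based contradiction that all peeling finishes by $k = d + O(\log n/\eps)$; the remaining ingredients are standard Laplace tail bounds and post-processing.
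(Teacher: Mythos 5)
Your proposal is correct, and the overall structure (privacy by post-processing of the peeling transcript, utility by bounding the induced degree at the moment of removal) matches the paper's. The one place where you genuinely diverge is how the removal threshold is tied to the degeneracy $d$. The paper invokes \cref{thm:optimal-utility} directly: each vertex $v$ is removed at a threshold within $O(\log n/\eps)$ of its core number $k(v)$, and $d = \max_v k(v)$. You instead prove from scratch that peeling must finish by the first threshold $k \ge d + 60\log n/\eps$, via the min-degree contradiction: a surviving induced subgraph $H$ would have every vertex of induced degree exceeding $k - 60\log n/\eps \ge d$, contradicting the definition of degeneracy. Your route is self-contained and does not depend on the core-number approximation guarantee, which is a modest advantage in rigor; the paper's is shorter because it reuses an already-proved theorem. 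Two small imprecisions, neither fatal: the out-degree of $v$ in the orientation induced by $J$ is \emph{at most} $d_{V_{t-1}}(v)$ rather than exactly equal to it (neighbors peeled in the same inner iteration but processed earlier precede $v$ in $J$), and the first threshold reaching $d + 60\log n/\eps$ may be as large as $d + 120\log n/\eps$ because of the step size, so the final constant is slightly larger than you state; the paper's own proof shares the first imprecision, and both only affect constants inside the $O(\cdot)$.
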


\begin{proof}
    We first prove that our modified algorithm is $\eps$-(local) edge differentially private. First, the proof of~\cref{thm:optimal-utility}
    ensures that the ratio of the probabilities that all vertices $v$ are removed at the same steps in edge-neighboring
    graphs $G$ and $G'$ is upper bounded by $e^{\eps}$. Thus, since the ordering of the vertices follows
    precisely the order by which vertices are removed, then, the ordering that is released is $\eps$-(local) edge DP.

    Now, we prove the upper bound on the out-degree returned by the algorithm. Each removed vertex $v$ has induced degree, when removed,
    that is upper bounded by $k(v) + O\left(\frac{\log n}{\eps}\right)$, with high probability, by~\cref{thm:optimal-utility}. 
    The induced degree when the vertex is removed is exactly equal to its out-degree.
    Then, since $d = \max_{v \in V}\left(k(v)\right)$, the out-degree of the returned 
    ordering is at most $d + O\left(\frac{\log n}{\eps}\right)$, with high probability.
\end{proof}
\appendix
\section{Proof of the Approximation Factor in~\cref{sec:improved-distributed}}\label{app:distributed-approx-proof}

\begin{proof}[Proof of~\cref{thm:approx-logn-rounds}]
    This proof is nearly identical to the proof of Theorem 4.7 in~\cite{DLRSSY22} except
    for two details. First, in~\cref{alg:improved-ledp-kcore}, we use Laplace noise instead
    of symmetric geometric noise and second, we add additional noise to the threshold when determining whether 
    vertices move up levels. Notice that these two changes \emph{only} affects Invariants 1 and 2. Thus, 
    we only need to show that modified versions of Invariants 1 and 2 still hold for our algorithm and 
    Theorem 4.7 and our approximation proof follow immediately.

    \begin{invariant}[Degree Upper Bound]\label{inv:degree-1}
    If node $i \in V_{\lcur}$ (where $V_\lcur$ is the set of nodes in level $\lcur$) and level $\lcur < \numlevels - 1$, then $i$
    has at most $\upexp^{\lfloor r/(\numgrouplevels)\rfloor} + \frac{c\log n}{\eps}$
    neighbors in levels $\geq r$ \whp{} for large enough constant $c > 0$.
    \end{invariant}
    
    \begin{invariant}[Degree Lower Bound]\label{inv:degree-2}
        If node $i \in V_{\lcur}$  (where $V_\lcur$ is the set of nodes in level $\lcur$) and level $\lcur > 0$, then
        $i$ has at least $(1 + \lf)^{\lfloor (r-1)/(\numgrouplevels)\rfloor} - \frac{c \log n}{\eps}$ neighbors in levels $\geq r - 1$ \whp{}
        for large enough constant $c > 0$.
    \end{invariant}

    First, we know that by the density function of the Laplace distribution, any
    noise drawn from the Laplace distributions used in our algorithm
    is upper bounded by $\frac{c\log{n}}{\eps}$ for large enough constant $c > 0$ \whp{}.
    We first show~\cref{inv:degree-1}. A node moves up a level when $\hnup_{i}$ exceeds the threshold given by 
    $\upexp^{\lfloor r/(\numgrouplevels)\rfloor} + \tilde{\ell}(i)$. By our algorithm, $\hnup_{i} = \nup_{i} + X$
    where $X$ and $\tilde{\ell}(i)$ are noises drawn from the Laplace distribution.
    Thus, if a node doesn't move up, then $\nup_{i} \leq \upexp^{\lfloor r/(\numgrouplevels)\rfloor} + \tilde{\ell}(i) - X$.
    The expression on the right is at most $\upexp^{\lfloor r/(\numgrouplevels)\rfloor} + 2 \cdot \frac{c\log{n}}{\eps}$
    \whp for large enough constant $c > 0$. 

    Now, we prove~\cref{inv:degree-2}. If a node is on level $> 0$, then it must have moved up at least one level.
    If it moved up at least one level (from level $r-1$), then its $\hnup_{i}$ at level $r-1$ must be 
    at least $\upexp^{\lfloor (r-1)/(\numgrouplevels)\rfloor} + \tilde{\ell}(i)$. Then, it holds that $\nup_i \geq 
    \upexp^{\lfloor (r-1)/(\numgrouplevels)\rfloor} + \tilde{\ell}(i) - X$ at level $r-1$. The right hand side
    is at least $\upexp^{\lfloor (r-1)/(\numgrouplevels)\rfloor} - 2 \cdot \frac{c\log n}{\eps}$ for large enough constant $c > 0$
    and~\cref{inv:degree-2} also holds.

    We now can use the proof of Theorem 4.7 exactly as written except with the new bounds given in~\cref{inv:degree-1}
    and~\cref{inv:degree-2} to obtain our final approximation guarantee.
\end{proof}

\section{Algorithm for Low Out-Degree Ordering}\label{app:low-outdegree-algorithm}

\begin{algorithm}[h]
\caption{Low Out-Degree Ordering}
\label{alg:low-outdeg}
\textbf{Input:} Graph $G=(V,E)$, privacy parameter $\eps>0$.\\
\textbf{Output:} An $\left(1,O\left(\log(n)/\eps\right)\right)$-approximate $k$-core value of each node $v\in V$\\
\begin{algorithmic}[1]
\STATE $V_0\leftarrow V$, $t\leftarrow 0$, $k=c'\log{n}/\eps$.
\STATE Initialize $\hat{k}(v)\leftarrow 0$ for all $v\in V$
\STATE {\color{blue} Initialize ordered list $J$}
\FOR{$v\in V$}
\STATE $\tilde{\ell}(v)\leftarrow \text{Lap}(4/\eps)$
\ENDFOR{}
\STATE
\WHILE{$k\le n$}
\REPEAT
\STATE $t\leftarrow t+1$, $V_t\leftarrow V_{t-1}$
\FOR{$v\in V_{t-1}$}
\STATE $\nu_t(v)\leftarrow\text{Lap}(8/\eps)$
\IF{$d_{V_{t-1}}(v)+\nu_t(v)\le k+\tilde{\ell}(v)$}
\STATE $V_t\leftarrow V_t-\{v\}$
\STATE {\color{blue} Append $v$ to the end of $J$}
\ENDIF{}
\ENDFOR{}
\UNTIL{$V_{t-1}-V_{t} = \emptyset$}
\STATE
\STATE Update the core numbers $\hat{k}(v)\leftarrow k$ for all nodes $v\in V_t$
\STATE $k\leftarrow k + c'\log{n}/\eps$
\ENDWHILE{}
\STATE {\color{blue} Return $J$}
\end{algorithmic}
\end{algorithm}

\section*{Acknowledgements}
We thank anonymous reviewers for their comments and suggestions, and Jalaj Upadhyay for his feedback on the manuscript. Laxman Dhulipala and George Z. Li are supported by NSF award number CNS-2317194. Quanquan C. Liu is supported by an Apple Simons Research Fellowship.

\bibliographystyle{alpha}
\bibliography{refs}

\end{document}